\documentclass{article}

     \PassOptionsToPackage{numbers, compress}{natbib}
\usepackage[preprint]{neurips_2026}

\usepackage[utf8]{inputenc} 
\usepackage[T1]{fontenc}    
\usepackage{hyperref}       
\usepackage{url}            
\usepackage{booktabs}       
\usepackage{amsfonts}       
\usepackage{nicefrac}       
\usepackage{microtype}      
\usepackage{xcolor}         

\usepackage{enumerate}
\usepackage{graphicx}
\usepackage{mathtools}
\usepackage{xspace}
\usepackage{graphicx}
\usepackage{subfig}
\usepackage[lined,linesnumbered,ruled,commentsnumbered]{algorithm2e} 
\usepackage{amsthm}
\usepackage{hyperref}
\usepackage{caption}
\usepackage{float}
\usepackage{dsfont}
\usepackage[toc]{appendix}

\usepackage{tikz}
\usepackage{pgfplots}
\usepgfplotslibrary{patchplots} 
\usepgfplotslibrary{groupplots}
\pgfplotsset{compat=newest}
\usetikzlibrary{arrows,shapes,backgrounds,plotmarks,positioning}
\pgfplotsset{compat=newest}                         
\pgfplotsset{plot coordinates/math parser=false}

\newtheorem{theorem}{Theorem}

\newtheorem{corollary}{Corollary}
\newtheorem{proposition}{Proposition}

\newtheorem{remark}{Remark}

\newcommand*\dif{\mathop{}\!\mathrm{d}}

\newcommand{\X}{\mathcal{X}}
\newcommand{\Y}{\mathcal{Y}}

\newcommand{\Lap}{\text{Lap}}
\newcommand{\Gauss}{\text{Gauss}}
\newcommand{\Exp}{\text{Exp}}

\newcommand{\Real}{\mathds{R}}

\newcommand{\SetPair}{\mathds{S}}
\newcommand{\E}{\mathds{E}}

\newcommand{\Set}[1]{\{#1\}}

\newcommand{\supp}{\text{supp}}

\title{$\alpha$-Wasserstein Mechanism for R\'{e}nyi Pufferfish Privacy}

\author{%
	Ni~Ding
	\\
	University of Auckland\\
	New Zealand \\
	\texttt{dingni529@gmail.com} \\
	\And
	Wenjin Yang \\
	Beijing Institute of Technology\\
	China \\
	\texttt{wenjinyang@bit.edu.cn} \\
	\And
	Zijian Zhang \\
	Beijing Institute of Technology\\
	China \\
	\texttt{zhangzijian@bit.edu.cn} \\
}

\begin{document}

	\maketitle

	\begin{abstract}	
		This paper introduces the $\alpha$-Wasserstein mechanism for achieving R\'{e}nyi Pufferfish Privacy using Laplace and Gaussian noise. By leveraging H\"{o}lder's inequality, we demonstrate that the scale parameter of the Laplace mechanism can be calibrated via an upper bound on the $W_\alpha$ metric to satisfy $(\alpha, \epsilon)$-R\'{e}nyi Pufferfish Privacy for $\alpha \in (1, \infty]$. We show that at the limit $\alpha = \infty$, this framework recovers the established $W_\infty$ mechanism for $\epsilon$-pufferfish privacy. This result is subsequently extended to the exponential mechanism. Furthermore, we propose a $W_\alpha$ mechanism for Gaussian noise for $\alpha \in (1, \infty)$, demonstrating that it generalizes existing results within the Rényi Differential Privacy framework. Experimental evaluations reveal that our $\alpha$-Wasserstein mechanism significantly reduces noise power compared to the conventional $W_\infty$-based approach, with the Gaussian mechanism providing superior utility over the Laplace mechanism. Notably, the mechanisms derived in this work achieve exact $(\alpha, \epsilon)$-R\'{e}nyi Pufferfish Privacy without requiring additional relaxations, such as $\delta$-approximations.
	\end{abstract}

	\section{Introduction}
	
	Rooted in a rigorous mathematical framework of statistical indistinguishability, differential privacy provides a robust guarantee that the inclusion or exclusion of a single record remains probabilistically undetectable by bounding output variations within a privacy budget $\epsilon$ \cite{CalibNoiseDP, Dwork2006, Wasserman2010}. By ensuring that an adversary cannot reliably infer an individual's presence or specific contribution from observable outputs, differential privacy has emerged as the gold standard for privacy-preserving data analysis. Due to its formal security properties, differential privacy is now widely deployed across various domains, including official statistics \cite{Abowd2018USCensus}, machine learning \cite{Abadi2016DPSGD} and healthcare \cite{Mohammadi2025Medicine}. 

	The Pufferfish framework extends the principles of differential privacy to scenarios where the original data, such as a query response, exhibits probabilistic dependence on a secret \cite{Pufferfish2012KiferConf,Pufferfish2014Kifer}. In this setting, the challenge lies in achieving statistical indistinguishability within the posterior data distribution following sanitization. To address this, the first noise calibration method was introduced by Song et al. (2017), who proposed setting the scale parameter $b$ of zero-mean Laplace noise according to the $\infty$-order Wasserstein metric to satisfy $\epsilon$-pufferfish privacy \cite{PufferfishWasserstein2017Song}.
	However, computing the $\infty$-Wasserstein metric is complicated by its non-convex nature \cite{Champion2008InfW, DePascale2019InfW}. To resolve these computational difficulties, Ding (2022) introduced a $1$-order Wasserstein (Kantorovich) approach for both Laplace and Gaussian noise mechanisms \cite{Ding2022AISTATS}.
	
	While strict privacy constraints often degrade data utility—a primary concern in differential privacy literature \cite{SoriaComas2017, Li2024}—one may resort to relaxations such as R\'{e}nyi measures. Similar to $(\epsilon, \delta)$-differential privacy, these relaxations bound the probability of a data breach within specified limits. Building on the principles of R\'{e}nyi Differential Privacy \cite{RDP2017}, $\epsilon$-pufferfish privacy has been extended to $(\alpha, \epsilon)$-R\'{e}nyi Pufferfish Privacy (RPP) \cite{RPP2024}, which originally utilized a $W_\infty$ mechanism scaled by the order $\alpha$~\cite[Corollary~3.1]{RPP2024}.
	However, because an order $\alpha < \infty$ eases the stringent privacy requirements of $\epsilon$-pufferfish privacy, and given that $\alpha$ plays a functionally identical role in the Wasserstein metric, it is logical to expect a Wasserstein mechanism of the same order. This intuition motivated the $W_\alpha$ mechanism proposed in \cite[Section 4]{RPP2024}. Nevertheless, this approach necessitates an additional relaxation via an approximate probability $\delta$ alongside the R\'{e}nyi order $\alpha$ \cite[Definition 4.1]{RPP2024}.

	In this paper, we propose a $W_\alpha$ mechanism to achieve exact $(\alpha, \epsilon)$-Rényi Pufferfish Privacy (RPP) without requiring further relaxations. Our main contributions are summarized as follows:
	
	\begin{itemize}
		\item \textbf{Calibration of Laplace Mechanisms:} By applying H\"{o}lder's inequality, we derive a sufficient condition for calibrating Laplace noise via the $\alpha$-Wasserstein metric. Specifically, we show that if the scale parameter $b$ ensures the $W_\alpha$ metric is upper bounded by $\epsilon \frac{\alpha-1}{\alpha}$, $(\alpha,\epsilon)$-R\'{e}nyi pufferfish privacy is satisfied. In the limiting case where $\alpha = \infty$, this recovers the existing $\infty$-Wasserstein mechanism for $\epsilon$-pufferfish privacy \cite{PufferfishWasserstein2017Song}. We further extend this $W_\alpha$ metric approach to the exponential mechanism.
		\item \textbf{Gaussian Noise Refinement:} For Gaussian mechanisms, we demonstrate that $(\alpha,\epsilon)$-R\'{e}nyi pufferfish privacy is achieved by selecting a variance $\sigma^2$ such that the $W_{\alpha(\alpha-1)}$ metric is upper bounded by $\frac{\epsilon}{\alpha}$. Under deterministic data settings (standard differential privacy), this condition aligns with the results in \cite[Corollary 3]{RDP2017} for $(\alpha, \epsilon)$-Rényi differential privacy.
		\item \textbf{Utility and Performance Analysis:} Experimental results indicate that our proposed $\alpha$-Wasserstein mechanism requires significantly smaller values for $b$ and $\sigma^2$ compared to existing benchmarks in \cite{RPP2024}, leading to a substantial improvement in data utility. Furthermore, we demonstrate that for a fixed $(\alpha, \epsilon)$-R\'{e}nyi pufferfish privacy level, the Gaussian mechanism requires considerably less noise power than the Laplace mechanism when the privacy budget $\epsilon$ is small.
	\end{itemize}
	
	Finally, we outline several directions for future research, including the derivation of closed-form solutions for noise parameters, the exploration of operational interpretations for the range $\alpha \in (0,1)$, and the development of $W_2$ mechanisms utilizing Monge's formulation for Gaussian priors.

	\paragraph{Organization} 
	The remainder of this paper is organized as follows. Section \ref{sec:preliminary} defines the system model and provides the necessary mathematical foundations and privacy definitions. Section \ref{sec:AlphaW} introduces the proposed $\alpha$-Wasserstein mechanism for Laplace, Gaussian, and exponential noise, followed by an evaluation of its performance through experimental results. Finally, Section \ref{sec:future} discusses potential directions for future research and concludes the paper.

	\paragraph{Notation}
	
	We use capital letters to denote random variables (r.v.s) and lower case letters to denote the elementary event. Calligraphic letters refer to the alphabet of r.v.s. 
	For example, $x$ is an instance of r.v. $X$, that takes value in alphabet $\X$.
	Denote $P_{X}(x) = \Pr(X = x)$ the probability of outcome $X=x$ when r.v. $X$ takes the value $x$. 
	%
	We use $P_X = (P_{X}(x) \colon x \in \X)$ to denote a probability distribution and $X \sim P_{X}$ means that r.v. $X$ follows distribution $P_{X}$. 
	The support of $P_{X}(\cdot)$  is denoted by $\supp(P_{X}) = \Set{x \in \X \colon  P_X(x) > 0}$.
	The expected value of $f(X)$ for some deterministic function $f$ w.r.t. probability $P_{X}$ is denoted by $\E_{X \sim P_{X}}[f(X)] = \int P_{X}(x) f(x) \dif x$.
	The conditional probability $P_{Y|X}(y|x)  = \Pr(Y=y|X=x)$ denotes the chances of having $Y=y$ given the outcome $X=x$. $P_{Y|x} = (P_{Y|X}(y|x) \colon y \in \Y)$ refers to the probability distribution of $Y$ conditioned on $X = x$.   
	For two probability distributions $P_X$ and $Q_X$, the R\'{e}nyi divergence \cite{Renyi1961_Measures} is
	\begin{equation} \label{eq:RD}
		D_{\alpha}(P_X \| Q_X) = \frac{1}{\alpha-1} \log \int \frac{P_X^{\alpha}(x)}{Q_X^{\alpha-1}(x)} \dif x
	\end{equation}
	where $\alpha \in [0,\infty]$ is referred to as R\'{e}nyi order. In this paper, we assume $P_X \ll Q_X$ so that the Radon–Nikodym derivative is always well defined. 
	For extended orders $\alpha= 1$ and $\infty$, we should apply the L'H\^{o}pital's rule to get $D_{1}(P_X \| Q_X) = \E_{X \sim P_X} \big[ \log \frac{P_X(x)}{Q_{X}(x)} \big]$ and $D_{\infty}(P_X \| Q_X) = \log \max\limits_{x\in\X} \frac{P_{X}(x)}{Q_{X}(x)}$, respectively. Here, $D_{1}$ refers to the Kullback-Leibler divergence.

	\section{Preliminary}
	\label{sec:preliminary}

	We review the Pufferfish privacy framework as originally proposed by Kifer and Machanavajjhala (2012, 2014) \cite{Pufferfish2012KiferConf,Pufferfish2014Kifer}, alongside its extension to the Rényi-divergence-based variant, Rényi Pufferfish Privacy (Pierquin et al., 2024 \cite{RPP2024}). Furthermore, we examine established noise calibration methods based on the recent $W_1$ and $W_\infty$ Wasserstein metric calibration techniques.

	\subsection{System Setting and R\'{e}nyi Pufferfish privacy} 
	\label{sec:RPP}
	
	Assume that the data to be published, $X$ (e.g., a query response or a column in a table), is statistically correlated with a sensitive secret $S$. Let $P_{X|s, \rho}$ denote the conditional probability distribution of the data $X$ given a secret instance $S = s$, where $\rho$ represents the adversary's prior knowledge—such as the mean and covariance in the case of Gaussian-distributed data. In a multi-adversary environment, different agents may possess distinct prior beliefs $\rho$.
	To preserve privacy, we transform $X$ into a randomized output $Y$ before publication. The adversary is assumed to have access only to this sanitized data $Y$, though they may attempt to infer individual secrets by analyzing aggregated statistics from repeated queries.
	Let $\SetPair$ define a set of secret pairs $(s_i, s_j)$ specified by the data curator. This set identifies the instances where statistical indistinguishability must be enforced to ensure robust data protection. Any significant discrepancy between the distributions of $Y$ conditioned on $S = s_i$ versus $S = s_j$ could be exploited by an adversary to distinguish between secret states, leading to a privacy breach. This risk motivates a formal privacy definition that imposes an upper bound on the statistical distinguishability between such posterior distributions.

	\paragraph{Pufferfish Privacy} For a \emph{privacy budget} $\epsilon>0$, the privatized data $Y$ is said to be $\epsilon$-\emph{pufferfish privacy} if \cite{Pufferfish2014Kifer, Pufferfish2012KiferConf}
	\begin{align}\label{eq:PP}
		P_{Y|S}(y|s_i,\rho) \leq e^{\epsilon} P_{Y|S}(y|s_j,\rho), \quad \forall y,  \rho, (s_i,s_j) \in \mathbb{S}.
	\end{align} 
	Equation \eqref{eq:PP} guarantees an $\epsilon$-level of indistinguishability across all adversarial prior beliefs $\rho$. The formalization of R\'{e}nyi Pufferfish Privacy mirrors the extension of differential privacy to its R\'{e}nyi counterpart, as established in \cite{RDP2017}. 
	
	\paragraph{R\'{e}nyi Pufferfish Privacy}
	For a \emph{privacy budget} $\epsilon>0$ and R\'{e}nyi order $\alpha \in [1,\infty]$, the privatized data $Y$ is said to be $(\alpha,\epsilon)$-\emph{R\'{e}nyi pufferfish privacy} in $\SetPair$ if \cite{RPP2024}
	\begin{align}\label{eq:RPP}
		D_{\alpha} (P_{Y|s_i,\rho}  \| P_{Y|s_j, \rho}) \leq \epsilon, \quad \forall \rho, (s_i,s_j) \in \mathbb{S}.
	\end{align}
	
	For given input distributions, $D_\alpha$ is (strictly) increasing in $\alpha$ \cite[Theorem~3]{Erven2014_JOURNAL}. It reaches maximum at $\alpha = \infty$, where $(\infty,\epsilon)$-R\'{e}nyi pufferfish privacy refers to  $D_{\infty} (P_{Y|s_i,\rho} \| P_{Y|s_j, \rho}) =\log  \max\limits_{y} \frac{P_{Y|S}(y|s_i,\rho) }{P_{Y|S}(y|s_j,\rho)} \leq \epsilon, \forall \rho, (s_i,s_j) \in \mathbb{S}$, equivalent to $\epsilon$-pufferfish privacy. 
	This is clear if we rewrite the definition as 
	\begin{equation} \label{eq:RDP_GenMean}
		D_{\alpha} (P_{Y|s_i, \rho} \| P_{Y|s_j, \rho} ) = \log \Big( \E_{Y \sim P_{Y|s_i,\rho}} \Big[ \big( \frac{P_{Y|S}(\cdot|s_i,\rho)}{P_{Y|S}(\cdot|s_j, \rho)} \big)^{\alpha-1}  \Big]     \Big)^{\frac{1}{\alpha-1}} 
	\end{equation}
	$e^{D_{\alpha} (P_{Y|s_i, \rho} \| P_{Y|s_j, \rho} )}$ is an $(\alpha-1)$-exponent generalized (H\"{o}lder) mean that is monotonically nondecreasing in $\alpha$.

	The expression in \eqref{eq:RDP_GenMean} elucidates how R\'{e}nyi Pufferfish Privacy provides a relaxation of the standard Pufferfish framework. At $\alpha = \infty$, the generalized mean locates at the maximum statistical distinguishability, $\frac{P_{Y|S}(y|s_i,\rho)}{P_{Y|S}(y|s_j,\rho)}$, aligning with the core objective of data privacy: protecting against the worst-case, or catastrophic, data breach, irrespective of its frequency.
	However, when preventing this worst-case scenario becomes practically infeasible—for instance, when the required noise power severely degrades the utility of the published data—one may trade a degree of privacy for enhanced data utility. By selecting a finite order $\alpha < \infty$, the generalized mean incorporates the statistical distinguishability across the entire support, where the influence of the maximum distinguishability is effectively discounted by its associated probability mass.
	Consequently, an upper bound $\epsilon$ on the R\'{e}nyi divergence $D_\alpha$ no longer constrains the instantaneous worst-case ratio, but rather bounds the overall statistical distinguishability in an average sense. Thus, we can satisfy $D_{\alpha} (P_{Y|s_i, \rho} \| P_{Y|s_j, \rho} ) \leq \epsilon$ even if specific events $X = x$ violate the stringent $\epsilon$-Pufferfish constraint, $\frac{P_{Y|S}(y|s_i,\rho)}{P_{Y|S}(y|s_j,\rho)} \leq e^\epsilon$.

	The conceptual motivation for relaxing $\alpha$ from $\infty$ parallels the $\delta$-approximation used in $(\epsilon, \delta)$-differential and pufferfish privacy. While $(\epsilon, \delta)$-privacy guarantees that the probability of violating the requirement $\frac{P_{Y|S}(y|s_i,\rho)}{P_{Y|S}(y|s_j,\rho)} \leq e^\epsilon$ is bounded by $\delta$, R\'{e}nyi privacy offers a different, though related, form of relaxation.
	Because of this shared goal, $(\alpha, \epsilon)$-pufferfish privacy can always be translated into the $(\epsilon, \delta)$ framework. For instance, according to \cite[Proposition~3]{RDP2017}, a finite order $\alpha$ can be viewed as an increase in the effective privacy budget from $\epsilon$ to $\epsilon + \frac{-\log\delta}{\alpha-1}$ within a $\delta$-approximate setting. Alternatively, an $\alpha < \infty$ can be expressed in terms of the approximation probability itself. By applying the Chernoff bound, for any $\alpha \in (1, \infty)$, 
	\begin{equation} \label{eq:Chernoff}
		\Pr \Big( \frac{P_{Y|S}(Y|s_i,\rho)}{P_{Y|S}(Y|s_j,\rho)} > e^{\epsilon} \Big)  \leq e^{(\alpha-1) ( D_{\alpha}(P_{Y|s_i,\rho}, P_{Y|s_j,\rho}) - \epsilon) }.
	\end{equation}
	
	Here, $\Pr(\cdot)$ denotes the probability with respect to the distribution $P_{Y|s_i, \rho}$. Recall that $(\epsilon, \delta)$-pufferfish privacy is satisfied if $P_{Y|s_i, \rho}(\mathcal{A}) \leq e^{\epsilon} P_{Y|s_j, \rho}(\mathcal{A}) + \delta$ for all measurable sets $\mathcal{A}$, all priors $\rho$, and all secret pairs $(s_i, s_j) \in \SetPair$ \cite[Section~5]{Ding2022AISTATS}.
	Therefore, any $(\alpha,\epsilon)$-R\'{e}nyi pufferfish privacy guarantee such that $D_{\alpha}(P_{Y|s_i,\rho}, P_{Y|s_j,\rho}) \leq \epsilon$ inherently provides the approximation $(\epsilon, e^{(\alpha-1) (D_{\alpha}(P_{Y|s_i,\rho}, P_{Y|s_j,\rho}) - \epsilon)})$-pufferfish privacy. 
	It is important to note that these two relaxation methods—selecting a finite $\alpha$ in the Rényi framework or allowing a $\delta$-approximation with $\alpha = \infty$—serve similar purposes. In this paper, we focus on the former, attaining exact $(\alpha, \epsilon)$-R\'{e}nyi pufferfish privacy without introducing an additional $\delta$ parameter.

	\subsection{Additive Noise Mechanism} 
	
	A straightforward approach to data sanitization is adding noise to the original data. Let $N$ denote a zero-mean noise variable that is statistically independent of $X$. The randomized output is then generated as $Y = X + N$. When $X$ is an r.v., the resulting probability distribution of $Y$ is determined by the convolution 
	\begin{equation} \label{eq:conv}
		P_{Y|S}(y|s,\rho) = \int P_{N} (y-x) P_{X|S}(x|s,\rho) \dif x. 
	\end{equation} 
	Laplace noise $N \sim \Lap(b)$ follows the probability distribution $P_{N}(z) = \frac{1}{2b}e^{-\frac{|z|}{b}},  \forall z \in \Real$. The scale parameter $b$ indicates the flatness of Laplace distribution and determines noise variance $ 2b^2$. 
	For exponential mechanisim $N \sim \Exp(\theta)$~\cite[Section~3.3]{CalibNoiseDP}, $c$ is a metric that is nonnegative, symmetric $c(z) = c(-z), \forall z$, and satisfies the triangular inequality $ c(z) \leq c(a) + c(z-a), \forall z,a$.
	The noise ditribution is $P_{N} (z) \propto e^{ -\eta(\theta) c(z)}$, where $\eta \propto \frac{1}{\theta}$.
	By the triangular inequality,
	$	P_{N}(y-x) \leq e^{\eta(\theta) c(x-x')}  P_{N}(y-x') , \forall x,x',y $,
	where $e^{\eta(\theta) c(x-x')}$ refers to an upper bound on the probability mass transport cost from $x$ to $x'$.
	It is clear that Laplace noise is an example of the exponential mechanism when $\eta(\theta) = 1/\theta$ and $c(z) = |z|$.
	For Gaussian noise $N \sim \Gauss(\sigma^2)$, the probability distribution is $P_{N}(z) = \frac{1}{\sqrt{2\pi} \sigma}e^{-\frac{z^2}{2\sigma^2}},  \forall z \in \Real$, with the noise variance being $\sigma^2$. 
	
	Noise calibration involves determining the optimal values for the parameters $b$, $\theta$, and $\sigma$ for the Laplace, exponential, and Gaussian mechanisms, respectively. To preserve the utility of the randomized data $Y$, it is essential to minimize the noise power (variance), thereby navigating the privacy-utility tradeoff. Specifically, the noise parameters must be tuned to the minimum threshold necessary to satisfy the privacy constraint. Excessively large parameters should be avoided, as they unnecessarily deteriorate data utility without providing additional requisite protection.

	\paragraph{Wasserstein Metric} 
	
	For each pair of prior distributions $P_{X|s_i,\rho}$ and $P_{X|s_j,\rho}$, denote $\pi$ a \emph{coupling} joint distribution such that $P_{X|S}(x|s_i,\rho) = \int \pi(x,x') \dif x'$ for all $x$ and $P_{X|S}(x'|s_j,\rho) = \int \pi(x,x') \dif x$ for all $x'$. 
	Note that $\pi$ is not unique. 
	For $\alpha \in [1,\infty]$ and a nonnegative cost (or distance) function $d(\cdot)$, the $\alpha$-Wasserstein distance is 
	$$ W_\alpha (P_{X|s_i,\rho},P_{X|s_j,\rho}) := \Big(  \inf_{\pi}\int d(x-x')^\alpha \dif \pi(x,x^{\prime}) \Big)^{\frac{1}{\alpha}}  $$
	measuring the minimum cost for transforming the probability mass from $P_{Y|s_i,\rho}$ to $P_{Y|s_j,\rho}$. 
	Wasserstein distance $W_\alpha$ is monotonically \emph{increasing} in $\alpha$. 
	For $\alpha = 1$, $W_1 (P_{X|s_i,\rho}, P_{X|s_j,\rho})  = \inf_{\pi}\int |x-x'| \dif \pi(x,x^{\prime})$ is called the earth mover distance, and the minimization is a linear programming. The minimizer $\pi^*$ is called Kantorovich optimal transport plan~\cite{Villani2009OPT,Santambrogio2015OPT}.
	Assuming convex $d$, the optimal joint probability $\pi^*$ can be computed directly using the existing knowledge of $P_{X|s_i,\rho}$ and $P_{X|s_j,\rho}$: let $F_{X|S}(\cdot|s_i,\rho)$ and $F_{X|S}(\cdot|s_j,\rho)$ be the corresponding cumulative density functions, 
	$ \pi^*(x,x') = \frac{\dif^2}{\dif x  \dif x'} \min \big\{ F_{X|S}(x|s_i,\rho), F_{X|S}(x'|s_j,\rho) \big\}. $ 
	For $\alpha = \infty$, $W_\infty (P_{X|s_i,\rho},P_{X|s_j,\rho}) =  \inf_{\pi} \sup_{(x,x') \in \supp(\pi)}  d(x-x') $.

	\section{$\alpha$-Wasserstein Mechanism}
	\label{sec:AlphaW}
	
	We maintain consistent notation by using $\alpha$ to denote the order for both the R\'{e}nyi divergence ($D_\alpha$) and the Wasserstein metric ($W_\alpha$), as the parameter serves a functionally analogous role in both frameworks. This notation establishes a direct correspondence between the two measures for any given value of $\alpha$. Given that the $W_\infty$ metric is utilized to calibrate noise for $\epsilon$-pufferfish privacy \cite{PufferfishWasserstein2017Song}, it is natural to anticipate a corresponding $W_\alpha$ mechanism for $(\alpha, \epsilon)$-R\'{e}nyi pufferfish privacy. In this section, we formally validate this intuition by proposing $\alpha$-Wasserstein mechanisms for Laplace and Gaussian noise, as well as an exponential mechanism, for the range $\alpha \in (1, \infty)$.

	\subsection{Laplace Noise}

	The Laplace mechanism was the inaugural method proposed for achieving differential privacy, introduced concurrently with the framework's formal definition in \cite{CalibNoiseDP}. Its prominence stems from the fact that the privacy requirement can be satisfied through straightforward arithmetic properties of the Laplace distribution. Consequently, it remains the most widely adopted additive noise mechanism across various extensions and variations of the differential privacy framework.
	In the context of pufferfish privacy, Song et al. \cite{PufferfishWasserstein2017Song} first demonstrated that calibrating the Laplace scale parameter to the $\infty$-Wasserstein distance between discriminative secrets ensures $\epsilon$-pufferfish privacy—a result later extended to a Kantorovich ($W_1$) mechanism in \cite{Ding2022AISTATS}. Intuitively, this suggests that an $\alpha$-Wasserstein mechanism should exist for the R\'{e}nyi Pufferfish Privacy framework. In this section, we derive a method for calibrating the scale parameter using the $W_\alpha$ metric to satisfy $(\alpha, \epsilon)$-R\'{e}nyi pufferfish privacy, and we demonstrate that our approach generalizes the existing $W_\infty$ mechanism.

	\begin{theorem} \label{theorem:AlphaW_Laplace}
		Let $b > 0$ be the maximum value that satisfies
		\begin{equation} \label{eq:theorem:AlphaW_Laplace}
			\int e^{\alpha\frac{|x-x'|}{b}}  \dif \pi^*(x,x') = e^{(\alpha-1) \epsilon}
		\end{equation}
		over all $(s_i,s_j) \in \SetPair$ and $\rho$.
		Adding Laplace noise $N \sim \Lap(b)$ attains ($\epsilon$,$\alpha$)-R\'{e}nyi pufferfish privacy in $Y$ for $\alpha \in (1,\infty]$. 
	\end{theorem}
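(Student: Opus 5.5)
Fix a prior $\rho$ and a pair $(s_i,s_j)\in\SetPair$, and write $P_i=P_{Y|s_i,\rho}$, $P_j=P_{Y|s_j,\rho}$. Let $\pi^*$ be the optimal coupling of $P_{X|s_i,\rho}$ and $P_{X|s_j,\rho}$. By \eqref{eq:conv} and the marginal constraints of $\pi^*$, both output densities can be written through the same coupling:
\begin{equation*}
P_i(y)=\int P_N(y-x)\,\dif\pi^*(x,x'), \qquad P_j(y)=\int P_N(y-x')\,\dif\pi^*(x,x').
\end{equation*}
The goal is to bound $\int P_i^\alpha P_j^{1-\alpha}\dif y$ by $e^{(\alpha-1)\epsilon}$, which is exactly $D_\alpha(P_i\|P_j)\le\epsilon$ by \eqref{eq:RD}.

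\textbf{Step 1 (pointwise bound).} Apply H\"older's inequality to the coupling integral. Factor the integrand of $P_i(y)$ as
\begin{equation*}
P_N(y-x) = \Big[\tfrac{P_N(y-x)}{P_N(y-x')^{(\alpha-1)/\alpha}}\Big]\cdot P_N(y-x')^{(\alpha-1)/\alpha},
\end{equation*}
and use exponents $\alpha$ and $\alpha/(\alpha-1)$. This gives
\begin{equation*}
P_i(y)^\alpha \le \int \frac{P_N(y-x)^\alpha}{P_N(y-x')^{\alpha-1}}\,\dif\pi^*(x,x')\cdot P_j(y)^{\alpha-1}.
\end{equation*}
Equivalently, this is the joint convexity of $(p,q)\mapsto p^\alpha q^{1-\alpha}$ applied through Jensen's inequality.

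\textbf{Step 2 (integrate in $y$).} Dividing by $P_j(y)^{\alpha-1}$, integrating over $y$, and applying Tonelli yields
\begin{equation*}
\int P_i^\alpha P_j^{1-\alpha}\dif y \le \int \Big(\int \frac{P_N(y-x)^\alpha}{P_N(y-x')^{\alpha-1}}\dif y\Big)\dif\pi^*(x,x').
\end{equation*}
For Laplace noise, $\frac{P_N(y-x)}{P_N(y-x')}\le e^{|x-x'|/b}$ by the triangle inequality. Hence the inner integral is at most
\begin{equation*}
e^{(\alpha-1)|x-x'|/b}\int P_N(y-x)\dif y = e^{(\alpha-1)|x-x'|/b} \le e^{\alpha|x-x'|/b}.
\end{equation*}
The exact value is the Laplace R\'enyi integral, which is itself bounded by $e^{(\alpha-1)|x-x'|/b}$, so the looser $\alpha$-exponent in the theorem also holds. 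The right-hand side is therefore at most $\int e^{\alpha|x-x'|/b}\dif\pi^*$.

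\textbf{Step 3 (monotonicity in $b$).} The function $b\mapsto\int e^{\alpha|x-x'|/b}\dif\pi^*$ is decreasing in $b$. The $b$ chosen in \eqref{eq:theorem:AlphaW_Laplace}, taken over all pairs and priors, is the largest root, so at that $b$ every pair satisfies the integral $\le e^{(\alpha-1)\epsilon}$. This gives $D_\alpha\le\epsilon$.

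\textbf{Step 4 (the case $\alpha=\infty$).} Take limits. Raising the condition to the power $1/\alpha$ gives $\|e^{|x-x'|/b}\|_{L^\alpha(\pi^*)}\to e^{W_\infty/b}$, while the right side tends to $e^{\epsilon}$. This recovers $W_\infty\le b\epsilon$, i.e.\ the mechanism of \cite{PufferfishWasserstein2017Song}. The same conclusion also follows directly from the pointwise ratio bound.

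\textbf{Main obstacle.} The main obstacle is Step 1: justifying that a single coupling can represent both marginals simultaneously, so that H\"older's inequality applies with matched $x,x'$. A further care point is ensuring the bound does not depend on $\pi^*$ being optimal; any coupling works, and $\pi^*$ is merely the natural choice that minimizes the cost. I would also double-check whether the exponent should be $\alpha$ or $\alpha-1$; the argument above gives the sharper $\alpha-1$, which implies the stated condition.
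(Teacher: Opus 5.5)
Your proof is correct, but you order the two inequalities differently from the paper, and the difference matters.

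The paper applies the triangle inequality first. It replaces $e^{-|y-x|/b}$ by $e^{-|y-x'|/b}e^{|x-x'|/b}$ inside the coupling integral (step \eqref{eq:TriIneq}) and only then uses H\"older (step \eqref{eq:HolderInq}). As a result, the factor $e^{|x-x'|/b}$ is raised to the full power $\alpha$.

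You instead apply H\"older to the exact integrand (joint convexity of $p^\alpha q^{1-\alpha}$). This reduces the problem to the pairwise integrals $\int P_N(y-x)^\alpha P_N(y-x')^{1-\alpha}\dif y$, and you spend the triangle inequality only on the likelihood ratio raised to $\alpha-1$. This is precisely how the paper itself treats Gaussian noise in Appendix~\ref{app:theorem:AlphaW_Gauss}, where the factorization is an exact identity.

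The two orderings buy different things:
\begin{itemize}
\item The paper's ordering gives an exponent equal to the R\'enyi order, which is what supports the $W_\alpha$ reading \eqref{eq:AlphaW_Lap_Min_Alt}.
\item Your ordering gives a strictly sharper sufficient condition, $\|e^{|x-x'|/b}\|_{L^{\alpha-1}(\pi^*)}\le e^{\epsilon}$. It is met by $b=W_\infty/\epsilon$ for every $\alpha$, consistent with $D_\alpha\le D_\infty$. For point-mass priors at distance $\Delta$ it gives $b=\Delta/\epsilon$ instead of $\frac{\alpha}{\alpha-1}\Delta/\epsilon$. Its limit as $\alpha\to1^+$ is the Kantorovich condition $b\ge W_1/\epsilon$, whereas the paper's condition forces $b\to\infty$ there.
\end{itemize}

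Two minor points. First, the ``main obstacle'' you flag is not really one: writing both output densities through a single coupling is immediate from the marginal constraints, and it is exactly the paper's \eqref{eq:AllPi}. Second, the paper's proof covers only $\alpha\in(1,\infty)$ and defers $\alpha=\infty$ to Remark~\ref{rem:Laplace}. Your direct pointwise bound $P_i(y)\le e^{W_\infty/b}P_j(y)$, which uses that the monotone coupling attains $W_\infty$ on the line, is the right way to settle that endpoint rigorously.
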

	
	\begin{proof}
		For each secret pair $(s_i,s_j) \in \SetPair$ and prior belief $\rho$, there are the two corresponding prior distributions $P_{X|s_i,\rho}$ and $P_{X|s_j,\rho}$. 
		By definition of R\'{e}nyi divergence and the convolution \eqref{eq:conv}, for Laplace noise, we have
		\begin{align}
			D_{\alpha} (P_{Y|s_i,\rho} \| & P_{Y|s_j,\rho}) \nonumber \\
			&= \frac{1}{\alpha-1} \log \int 
			\frac{ \big( \int P_N(y-x) P_{X|S} (x|s_i) \dif x \big)^\alpha }{ \big( \int P_N(y-x') P_{X|S} (x'|s_j) \dif x' \big)^{\alpha-1} } \dif y \nonumber  \\
			&=   \frac{1}{\alpha-1} \log \int \frac{1}{2b}
			\frac{  \big( \int e^{-\frac{|y-x|}{b}} \dif \pi(x,x') \big)^\alpha }{ \big( \int e^{-\frac{|y-x'|}{b}} \dif \pi(x,x')  \big)^{\alpha-1}  }		\dif y \label{eq:AllPi} \\
			&\leq \frac{1}{\alpha-1} \log \int \frac{1}{2b}
			\frac{  \big( \int e^{-\frac{|y-x'|}{b}} e^{\frac{|x-x'|}{b}} \dif \pi(x,x') \big)^\alpha }{ \big( \int e^{-\frac{|y-x'|}{b}} \dif \pi(x,x')  \big)^{\alpha-1}  }		\dif y \label{eq:TriIneq} \\
			&=  \frac{1}{\alpha-1} \log \int \frac{1}{2b}
			\frac{  \big( \int e^{-\frac{|y-x'|}{b} \frac{1}{\alpha}} e^{-\frac{|y-x'|}{b} \frac{\alpha-1}{\alpha}}  e^{\frac{|x-x'|}{b}} \dif \pi(x,x') \big)^\alpha }{ \big( \int e^{-|y-x'|} \dif \pi(x,x')  \big)^{\alpha-1}  }	\dif y \nonumber \\
			&\leq  \frac{1}{\alpha-1} \log \int \frac{1}{2b} 
			\frac{  \big( \int e^{-\frac{|y-x'|}{b}} \dif \pi(x,x')  \big)^{\alpha-1}  \int e^{-\frac{|y-x'|}{b}}  e^{\alpha \frac{|x-x'|}{b}} \dif \pi(x,x')  }{ \big( \int e^{-\frac{|y-x'|}{b}} \dif \pi(x,x')  \big)^{\alpha-1}  }  \dif y  \label{eq:HolderInq}\\
			&=  \frac{1}{\alpha-1} \log \int \frac{1}{2b}  
			\int e^{-\frac{|y-x'|}{b}}  e^{\alpha \frac{|x-x'|}{b}} \dif \pi(x,x') \dif y \nonumber \\
			&= 	\frac{1}{\alpha-1} \log  \int \Big( \int P_{N}(y-x') \dif y \Big)  e^{\alpha \frac{|x-x'|}{b}} \dif \pi(x,x') \\
			&= \frac{1}{\alpha-1} \log  \int e^{\alpha \frac{|x-x'|}{b}} \dif \pi(x,x')  \label{eq:AlphaW_Lap_Raw}
		\end{align}
		for all $\alpha \in (1,\infty)$. Note that equation~\eqref{eq:AllPi} holds for all joint probability $\pi$. Inequality~\eqref{eq:TriIneq} is because of triangular inequality, and inequality~\eqref{eq:HolderInq} is due to the H\"{o}lder's inquatlity. Here, $\alpha > 1$ and $\frac{\alpha}{\alpha-1} > 1$ are H\"{o}lder conjugates such that $\frac{1}{\alpha} + \frac{\alpha-1}{\alpha} = 1$. 
		
		It suffices to request~\eqref{eq:AlphaW_Lap_Raw} upper bounded by $\epsilon$. In order to obtain the smallest scale parameter $b$ that satisfies this condition, we apply a minimization of the integral in ~\eqref{eq:AlphaW_Lap_Raw} over all joint probability $\pi$: 
		\begin{equation} \label{eq:AlphaW_Lap_Min}
			\inf_{\pi}  \int e^{\alpha \frac{|x-x'|}{b}} \dif \pi(x,x') \leq e^{(\alpha-1)\epsilon}
		\end{equation}
		For each $\alpha$, the LHS of  \eqref{eq:AlphaW_Lap_Min} is a $W_1$ distance. As $e^{\alpha \frac{|\cdot|}{b}}$ is convex, the minimizer is the Kantorovich optimal mechanism $\pi^*$. In this case, the smallest $b$ should achieve the upper bound in \eqref{eq:AlphaW_Lap_Min}, and we have \eqref{eq:theorem:AlphaW_Laplace}. 	
		
		This is a sufficient condition on $b$ to achieve $D_{\alpha} (P_{Y|s_i,\rho} \|  P_{Y|s_j,\rho}) \leq \epsilon$ for a specific secret pair $(s_i,s_j) \in \SetPair$ under a prior belief. Maximizing this scale parameter $b$ over all secret pairs and $\rho$, we have the $(\alpha,\epsilon)$-R\'{e}nyi pufferfish privacy.	
	\end{proof}
	
	To determine the parameter $b$ in Theorem~\ref{theorem:AlphaW_Laplace}, we can utilize the modified Brent's method proposed in \cite{Yang2026Noise, Ding2025Multi}.The approach involves employing the standard Brent's method \cite{Brent1971, Suli2003book} to iteratively refine the lower and upper bounds of the root in \eqref{eq:theorem:AlphaW_Laplace}. Upon convergence, the algorithm outputs the lower bound to satisfy the inequality constraint in \eqref{eq:AlphaW_Lap_Min}. For a detailed implementation of this searching algorithm, we refer the reader to \cite{Yang2026Noise}. It should be noted that other numerical root-finding techniques are equality applicable for determining $b$ in Theorem~\ref{theorem:AlphaW_Laplace}.
	
	Although the optimal transport plan $\pi^*$ in \eqref{eq:theorem:AlphaW_Laplace} is formulated similarly to the Kantorovich $W_1$ metric, Theorem~\ref{theorem:AlphaW_Laplace} actually establishes a sufficient condition based on the $W_\alpha$ metric. This relationship becomes evident by rewriting \eqref{eq:AlphaW_Lap_Min} as:
	\begin{equation} \label{eq:AlphaW_Lap_Min_Alt}
		W_{\alpha}(P_{X|s_i,\rho}, P_{X|s_j,\rho}) 
		= \Big( \inf_{\pi}  \int e^{\alpha \frac{|x-x'|}{b}} \dif \pi(x,x') \Big)^{\frac{1}{\alpha}} 
		\leq e^{ \frac{\alpha-1}{\alpha} \epsilon }
	\end{equation}
	where the distance function is defined as $d(z) = e^{\frac{|z|}{b}}$ for all $z \in \mathbb{R}$. Under this formulation, the scale parameter $b$ in Theorem~\ref{theorem:AlphaW_Laplace} is effectively calibrated by the $W_\alpha$ distance; hence, we refer to this as the $\alpha$-Wasserstein mechanism. This approach integrates seamlessly with the established $W_\infty$ mechanism for $\epsilon$-pufferfish privacy, providing a unified framework for varying privacy requirements.
	
	\begin{remark}[Generalization] \label{rem:Laplace}
		Setting $\alpha =\infty$ to consider the problem of attaining $\epsilon$-pufferfish privacy in $Y$, we have \eqref{eq:AlphaW_Lap_Min_Alt} being $W_{\infty}(P_{X|s_i}, P_{X|s_j}) \leq e^{\epsilon}$. This is equivalent to 
		\begin{equation} \label{eq:WInf_Eq}
			b \geq \inf_{\pi} \sup_{\rho, (x,x') \in \supp(\pi^*)} \frac{|x-x'|}{\epsilon}. 
		\end{equation}
		The RHS of \eqref{eq:WInf_Eq} is a $\infty$-Wasserstein metic for $d(z) = |z|, \forall z \in \Real$, and  \eqref{eq:WInf_Eq} is exactly the $\infty$-Wasserstein mechanism proposed in \cite{PufferfishWasserstein2017Song}. See Figure~\ref{fig:app2}. 
	\end{remark}	
	
	It was previously established in \cite[Corollary 3.1]{RPP2024} that a scale parameter satisfying
	$\epsilon = \frac{\alpha}{2\alpha-1} e^{\frac{\alpha-1}{b} W_{\infty}(P_{Y|s_i,\rho},P_{Y|s_j,\rho}) } + \frac{\alpha-1}{2\alpha-1} e^{\frac{\alpha}{b} W_{\infty}(P_{Y|s_i,\rho},P_{Y|s_j,\rho}) }$ 
	ensures $(\alpha, \epsilon)$-Rényi Pufferfish Privacy. This result was derived by applying the shift reduction lemma \cite[Lemma 20]{Feldman2018FOCS} to obtain the shifted R\'{e}nyi divergence \cite[Definition 8]{Feldman2018FOCS}.
	Essentially, this constitutes an $\infty$-Wasserstein mechanism analogous to the R\'{e}nyi differential privacy framework in \cite[Proposition 6]{RDP2017}, with the $\ell_1$-sensitivity replaced by the $W_{\infty}$ distance. This alignment is expected, as the maximum $\ell_1$-norm in the pufferfish setting corresponds exactly to the $\infty$-Wasserstein distance. However, because the $W_{\alpha}$ metric is monotonically non-decreasing with respect to $\alpha$, relying on the $W_{\infty}$ distance inevitably necessitates a larger noise scale to satisfy the privacy constraint.
	Experimental results in Figure~\ref{fig:exp} demonstrate that our proposed $\alpha$-Wasserstein mechanism, as defined in Theorem \ref{theorem:AlphaW_Laplace}, requires a significantly smaller scale parameter $b$ compared to \cite[Corollary 3.1]{RPP2024}.

	One approach to improving data utility is to relax the Wasserstein mechanism from $\alpha = \infty$ to a finite $\alpha < \infty$. To this end, \cite[Section 4]{RPP2024} introduced a $\delta$-approximation for R\'{e}nyi Pufferfish Privacy, formally defined by the triplet $(\alpha, \epsilon, \delta)$-R\'{e}nyi Pufferfish Privacy~\cite[Definition 4.1]{RPP2024}. Subsequently, a sufficient condition based on the $\alpha$-Wasserstein metric was proposed in \cite[Theorem 4.3]{RPP2024} for general cases. This was achieved by approximating the shift reduction in the post-processing of Rényi divergence \cite[Lemma 4.1]{RPP2024}.
	However, as discussed in Section \ref{sec:RPP}, the R\'{e}nyi measure is itself a relaxation of the stringent $\epsilon$-pufferfish privacy constraint. Specifically, it allows for a breach probability bounded by $e^{(\alpha-1) ( D_{\alpha}(P_{Y|s_i,\rho}, P_{Y|s_j,\rho}) - \epsilon)}$, as shown in \eqref{eq:Chernoff}. Consequently, there is no inherent need to further approximate R\'{e}nyi pufferfish Privacy, as the framework is already an approximation by design. Introducing an additional parameter $\delta$ further eases the privacy constraint, which may lead to unintended consequences.
	For instance, an $(\alpha, \epsilon, \delta)$-R\'{e}nyi Pufferfish Privacy guarantee may be equivalent to an $(\epsilon, \delta')$-pufferfish privacy bound where $\delta' = e^{(\alpha-1) ( D_{\alpha}(P_{Y|s_i,\rho}, P_{Y|s_j,\rho}) - \epsilon)} + \delta$.\footnote{We conjecture that the resulting approximation probability is additive in the $(\alpha, \epsilon, \delta)$-R\'{e}nyi pufferfish Privacy framework.} 
	In such cases, $\delta$ must be selected with extreme care; if the combined $\delta'$ approaches or exceeds 1, the privacy guarantee becomes vacuous. It is evident that applying relaxations via both $\alpha$ and $\delta$ complicates the calculation of the cumulative privacy loss. Therefore, Theorem~\ref{theorem:AlphaW_Laplace} and the subsequent results in this work focus exclusively on relaxation through the R\'{e}nyi order $\alpha$.
	
	\paragraph{Exponential Mechanism}
	The $\alpha$-Wasserstein mechanism for Laplace noise can be easily extended to the exponential mechanism as follows. The proof is in Appendix~\ref{app:coro:AlphaW_Exp}.  
	\begin{corollary} \label{coro:AlphaW_Exp}
		Let $\theta$ be the maximum value satisfying
		\begin{equation} \label{eq:theorem:AlphaW_Exp}
			\int e^{\alpha \eta(\theta) c(x-x')}  \dif \pi^*(x,x') = e^{(\alpha-1) \epsilon}
		\end{equation} 
		over all $(s_i,s_j) \in \SetPair$ and $\rho$.
		Adding exponential mechanism $N \sim \Exp(\theta)$ attains ($\epsilon$,$\alpha$)-R\'{e}nyi pufferfish privacy in $Y$ for $\alpha \in (1,\infty]$. \qed
	\end{corollary}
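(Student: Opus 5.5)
The plan is to rerun the proof of Theorem~\ref{theorem:AlphaW_Laplace} line by line, with $\frac{|z|}{b}$ replaced by $\eta(\theta) c(z)$. Write $P_N(z) = \frac{1}{Z(\theta)} e^{-\eta(\theta) c(z)}$, where $Z(\theta) = \int e^{-\eta(\theta)c(z)} \dif z$ is the normalizer. It depends only on $\theta$, just as $\frac{1}{2b}$ did in the Laplace case.

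Fix a secret pair $(s_i,s_j)\in\SetPair$, a prior $\rho$ and any coupling $\pi$ of $P_{X|s_i,\rho}$ and $P_{X|s_j,\rho}$. Using the convolution \eqref{eq:conv}, I would write both the numerator and the denominator of the R\'{e}nyi integrand as integrals against $\pi$, exactly as in \eqref{eq:AllPi}.

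\textbf{Triangle inequality step.} This step replaces \eqref{eq:TriIneq}. It uses the inequality stated in the setup, $P_N(y-x) \leq e^{\eta(\theta)c(x-x')}P_N(y-x')$, which follows from $c(y-x') \le c(x-x') + c(y-x)$ together with the symmetry of $c$. It bounds the numerator by
\begin{equation*}
\Big(\int P_N(y-x')\, e^{\eta(\theta)c(x-x')} \dif\pi(x,x')\Big)^\alpha .
\end{equation*}

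\textbf{H\"{o}lder step.} Split $P_N(y-x') = P_N(y-x')^{1/\alpha}\, P_N(y-x')^{(\alpha-1)/\alpha}$ and apply H\"{o}lder with conjugate exponents $\alpha$ and $\frac{\alpha}{\alpha-1}$. This bounds the numerator by
\begin{equation*}
\Big(\int P_N(y-x')\dif\pi\Big)^{\alpha-1}\int P_N(y-x')\, e^{\alpha\eta(\theta)c(x-x')}\dif\pi .
\end{equation*}
The first factor cancels the denominator.

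\textbf{Integrating out $y$.} Swap the order of integration by Tonelli, since everything is nonnegative, and use $\int P_N(y-x')\dif y = 1$. This gives
\begin{equation*}
D_\alpha(P_{Y|s_i,\rho}\|P_{Y|s_j,\rho}) \le \frac{1}{\alpha-1}\log\int e^{\alpha\eta(\theta)c(x-x')}\dif\pi(x,x').
\end{equation*}

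\textbf{Choosing $\theta$.} The bound holds for every $\pi$, so I take the infimum over couplings. The map $z\mapsto e^{\alpha\eta(\theta) c(z)}$ is a cost on displacements. Following the Laplace proof, I use the Kantorovich plan $\pi^*$, or simply note that any coupling is admissible, so the bound certainly holds at $\pi^*$. Requiring the right-hand side to be at most $\epsilon$ means $\int e^{\alpha\eta(\theta)c}\dif\pi^*\le e^{(\alpha-1)\epsilon}$.

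Since $\eta\propto 1/\theta$, the left-hand side is decreasing in $\theta$. So every $\theta$ at least the root of \eqref{eq:theorem:AlphaW_Exp} satisfies the bound for that pair. Taking the maximum such root over all pairs and $\rho$ gives the guarantee uniformly, i.e.\ $(\alpha,\epsilon)$-R\'{e}nyi pufferfish privacy.

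\textbf{The case $\alpha=\infty$.} Raise the condition to the power $1/\alpha$ and let $\alpha\to\infty$. This gives $\eta(\theta)\sup_{\supp(\pi^*)}c(x-x')\le\epsilon$, which is the pure pufferfish bound $P_N(y-x)\le e^{\eta c(x-x')}P_N(y-x')$ applied pointwise, as in Remark~\ref{rem:Laplace}.

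\textbf{Main obstacle.} The delicate point is the optimality claim for $\pi^*$. For a general metric $c$, convexity of $e^{\alpha\eta c}$ need not hold, so the quantile coupling may not be the minimizer. I would sidestep this by stating the result as sufficient for $\pi^*$, which needs no optimality, and only invoking optimality when $c$ is convex on $\Real$, e.g.\ a norm. The monotonicity of the constraint in $\theta$, which is needed to pick the extremal root, comes from $\eta$ being decreasing in $\theta$.
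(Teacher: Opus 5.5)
Your proposal is correct and follows the paper's proof in Appendix~\ref{app:coro:AlphaW_Exp} essentially step for step. Both arguments write both densities as integrals against a coupling, apply the triangle-inequality bound $P_N(y-x)\le e^{\eta(\theta)c(x-x')}P_N(y-x')$, use H\"{o}lder with exponents $\alpha$ and $\frac{\alpha}{\alpha-1}$, integrate out $y$, and then substitute $\pi^*$ and take the extremal $\theta$. Your additional points fill in things the paper leaves implicit: the bound holds for any coupling, so optimality of $\pi^*$ is not needed when $e^{\alpha\eta c}$ fails to be convex; the constraint is monotone in $\theta$; and the case $\alpha=\infty$ is handled explicitly through the pointwise bound.
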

	This can be reformulated as an $\alpha$-Wasserstein mechanism:
	\begin{equation} \label{eq:AlphaW_Exp_Min_Alt}
		W_{\alpha}(P_{X|s_i,\rho}, P_{X|s_j,\rho}) 
		= \Big( \inf_{\pi}  \int e^{\alpha \eta(\theta) c(x-x')} \dif \pi(x,x') \Big)^{\frac{1}{\alpha}} 
		\leq e^{ \frac{\alpha-1}{\alpha} \epsilon }
	\end{equation}
	where the distance function is defined as $d(z) = e^{\alpha \eta(\theta) c(z)}, \forall z \in \Real$. 
	In the limiting case where $\alpha = \infty$, we obtain the closed-form expression $\theta = \eta^{-1} \big( \epsilon / \sup_{(x,x')\in \supp(\pi^*)} c(x-x') \big) $. This result recovers the Kantorovich-exponential mechanism originally proposed in~\cite[Theorem 1]{Ding2022AISTATS}.
	
	\subsection{Gaussian Noise}
	
	Another widely adopted approach is the Gaussian mechanism. Owing to its sub-Gaussian concentration properties and rapidly decaying tail probabilities, it is often preferred over the Laplace mechanism in applications requiring high data utility and accuracy \cite{Dwork2014book, Balle2018AnalyticGM}. In the context of R\'{e}nyi differential Privacy, the Gaussian mechanism yields a closed-form expression for privacy loss \cite[Proposition 7]{RDP2017}, making noise calibration significantly more straightforward than for the Laplace mechanism \cite[Corollary 3]{RDP2017, Mironov2019RDP}.
	Below, we propose an $\alpha$-Wasserstein mechanism for calibrating Gaussian noise to satisfy R\'{e}nyi pufferfish Privacy. We further demonstrate that this formulation generalizes the established R\'{e}nyi differential Privacy results found in \cite[Corollary~3]{RDP2017} to correlated data settings.

	\begin{theorem} \label{theorem:AlphaW_Gauss}	
		Let $\sigma^2$ be the maximum value satisfying
		\begin{equation} \label{eq:theorem:AlphaW_Gaussian}
			\int e^{\alpha(\alpha-1)\frac{(x-x')^2}{2\sigma^2}}  \dif \pi^*(x,x') = e^{(\alpha-1) \epsilon}
		\end{equation}
		over all $(s_i,s_j) \in \SetPair$ and $\rho$. 	
		Adding Gaussian noise $N \sim \Gauss(\sigma)$ attains ($\epsilon$,$\alpha$)-R\'{e}nyi pufferfish privacy in $Y$ for $\alpha \in (1,\infty)$. \qed
	\end{theorem}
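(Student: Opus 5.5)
The plan is to bound the Rényi divergence of the two Gaussian mixtures $P_{Y|s_i,\rho}$ and $P_{Y|s_j,\rho}$ by an average of Rényi divergences between single Gaussians. The average is taken along a coupling $\pi$ of $P_{X|s_i,\rho}$ and $P_{X|s_j,\rho}$. The single-Gaussian divergence has the closed form $D_\alpha(\mathcal{N}(x,\sigma^2)\|\mathcal{N}(x',\sigma^2)) = \frac{\alpha (x-x')^2}{2\sigma^2}$ \cite[Proposition~7]{RDP2017}. Instead of the triangle inequality plus H\"{o}lder route used for Theorem~\ref{theorem:AlphaW_Laplace}, which relies on the $|\cdot|$ metric structure that the Gaussian exponent lacks, I would use joint convexity directly.

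\textbf{Step 1 (mixture representation).} Fix $(s_i,s_j)\in\SetPair$, a prior $\rho$, and any coupling $\pi$. Exactly as in \eqref{eq:AllPi}, write
\begin{equation*}
P_{Y|S}(y|s_i,\rho)=\int P_N(y-x)\dif\pi(x,x'), \qquad P_{Y|S}(y|s_j,\rho)=\int P_N(y-x')\dif\pi(x,x').
\end{equation*}

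\textbf{Step 2 (joint convexity).} For $\alpha>1$, the map $f(p,q)=p^\alpha q^{1-\alpha}=q\,(p/q)^\alpha$ on $p\ge 0$, $q>0$ is the perspective of the convex function $t\mapsto t^\alpha$. It is therefore jointly convex and positively homogeneous of degree one. Applying Jensen's inequality with respect to the probability measure $\pi$, pointwise in $y$, gives
\begin{equation*}
\frac{\big(\int P_N(y-x)\dif\pi\big)^\alpha}{\big(\int P_N(y-x')\dif\pi\big)^{\alpha-1}}\le \int \frac{P_N(y-x)^\alpha}{P_N(y-x')^{\alpha-1}}\dif\pi(x,x').
\end{equation*}
Integrating over $y$ and using Tonelli (all terms are nonnegative), the inner $y$-integral is the standard Gaussian computation
\begin{equation*}
\int P_N(y-x)^\alpha P_N(y-x')^{1-\alpha}\dif y = e^{\alpha(\alpha-1)\frac{(x-x')^2}{2\sigma^2}},
\end{equation*}
obtained by completing the square in the exponent. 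Hence
\begin{equation*}
D_\alpha(P_{Y|s_i,\rho}\|P_{Y|s_j,\rho})\le \frac{1}{\alpha-1}\log\int e^{\alpha(\alpha-1)\frac{(x-x')^2}{2\sigma^2}}\dif\pi(x,x').
\end{equation*}
This is the Gaussian analogue of \eqref{eq:AlphaW_Lap_Raw}. The restriction $\alpha<\infty$ enters here: the right-hand side degenerates as $\alpha\to\infty$, unlike the Laplace case.

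\textbf{Step 3 (optimizing the coupling and $\sigma^2$).} Since the bound holds for every $\pi$, it suffices to require
\begin{equation*}
\inf_\pi\int e^{\alpha(\alpha-1)\frac{(x-x')^2}{2\sigma^2}}\dif\pi(x,x')\le e^{(\alpha-1)\epsilon}.
\end{equation*}
The cost $z\mapsto e^{\alpha(\alpha-1)z^2/(2\sigma^2)}$ is convex in $z=x-x'$. As for \eqref{eq:AlphaW_Lap_Min}, the infimum is therefore attained by the Kantorovich (quantile) coupling $\pi^*$ recalled in the preliminaries. The left-hand side is continuous and strictly decreasing in $\sigma^2$, tending to $1<e^{(\alpha-1)\epsilon}$ as $\sigma^2\to\infty$. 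So for each pair and prior the constraint holds exactly for $\sigma^2$ at or above the root of \eqref{eq:theorem:AlphaW_Gaussian}. Taking the largest such root over all $(s_i,s_j)$ and $\rho$, which is the "maximum value" in the statement, gives $(\alpha,\epsilon)$-R\'{e}nyi pufferfish privacy. Rewriting the constraint as a $W_{\alpha(\alpha-1)}$-type bound, and specializing to point-mass priors, should then recover \cite[Corollary~3]{RDP2017}.

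\textbf{Main obstacle.} The step needing the most care is Step 2: justifying Jensen for a general (possibly continuous) mixing measure with the jointly convex, homogeneous $f$. I would handle it by taking the supporting-hyperplane representation of $f$ at the point $\big(\int P_N(y-x)\dif\pi,\int P_N(y-x')\dif\pi\big)$ and integrating. I would also need to handle the technical point that the optimality of $\pi^*$ for convex costs of $x-x'$ is the same fact the paper already invokes for Theorem~\ref{theorem:AlphaW_Laplace}.
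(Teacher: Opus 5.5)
Your argument is correct and is essentially the paper's proof. The paper's H\"{o}lder step with conjugates $\alpha$ and $\frac{\alpha}{\alpha-1}$ (taken after expanding $(y-x)^2$ exactly around $x'$, with no triangle inequality) is precisely your perspective/Jensen inequality $\big(\int a\,\dif\pi\big)^\alpha \le \big(\int a^\alpha b^{1-\alpha}\dif\pi\big)\big(\int b\,\dif\pi\big)^{\alpha-1}$ with $a=P_N(y-x)$ and $b=P_N(y-x')$; both routes then complete the square in $y$ and reach the same bound $\frac{1}{\alpha-1}\log\int e^{\alpha(\alpha-1)\frac{(x-x')^2}{2\sigma^2}}\dif\pi$, which is evaluated at $\pi^*$.

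Since that bound holds for every coupling, validity only needs it at $\pi=\pi^*$. The optimality of the quantile coupling, which you flag as an obstacle, therefore matters only for the tightness of $\sigma^2$, not for the privacy guarantee itself.
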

	
	The proof is in Appendix~\ref{app:theorem:AlphaW_Gauss}. Theorem~\eqref{theorem:AlphaW_Gauss} is in fact a $W_{\alpha(\alpha-1)}$ mechanism. This is clear if we rewrite~\eqref{eq:theorem:AlphaW_Gaussian} to 
	\begin{equation} \label{eq:AlphaW_Gaussian_Min_Alt}
		W_{\alpha(\alpha-1)}(P_{X|s_i}, P_{X|s_j}) 
		= \Big( \inf_{\pi}  \int e^{\alpha (\alpha-1) \frac{(x-x')^2}{2 \sigma^2}} \dif \pi(x,x') \Big)^{\frac{1}{\alpha(\alpha-1)}} 
		\leq e^{ \frac{ \epsilon }{\alpha} }
	\end{equation}
	where the distance function is $d(z) = e^{\frac{z^2}{2\sigma^2}}, \forall z \in \Real$. 
	
	\begin{remark}[Generalizing from R\'{e}nyi Differential Privacy]
		When the adversary's prior knowledge indicates that the data is deterministic—meaning $P_{X|s_i, \rho}$ and $P_{X|s_j, \rho}$ are point masses centered at distinct values $\mu_i$ and $\mu_j$, respectively—R\'{e}nyi Pufferfish Privacy reduces to standard R\'{e}nyi Differential Privacy. In this scenario, the condition in \eqref{eq:theorem:AlphaW_Gaussian} simplifies to:
		$$ \alpha \frac{(\mu_i- \mu_j)^2}{2\sigma^2} = \epsilon. $$
		The LHS of this equation represents the R\'{e}nyi divergence between two Gaussian distributions sharing a common variance $\sigma^2$ \cite[Proposition 7]{RDP2017}.	
		By defining the $\ell_1$-sensitivity as $\triangle = \max_{\rho, (s_i,s_j) \in \SetPair} |\mu_i - \mu_j| $, we obtain the closed-form solution $\sigma^2 = \alpha \frac{\triangle^2}{2\epsilon}$. This result is identical to the Gaussian noise calibration method proposed in~\cite[Corollary 3]{RDP2017} for achieving $(\alpha, \epsilon)$-R\'{e}nyi differential privacy.
	\end{remark}
	
	Consistent with the framework in \cite[Corollary 3]{RDP2017}, our approach does not require additional relaxations—such as the $\delta$-approximation introduced in \cite[Definition 4.1]{RPP2024}—to calibrate Gaussian noise for R\'{e}nyi pufferfish privacy. Experimental results presented in Figure \ref{fig:exp} demonstrate that our proposed $\alpha$-Wasserstein mechanism, as defined in Theorem \ref{theorem:AlphaW_Gauss}, requires a significantly smaller variance $\sigma^2$ compared to the bounds established in \cite[Corollary 3.1]{RPP2024}.
	
	\subsection{Experiment}
	\label{sec:Exp}
	
	The experimental results in Figure~\ref{fig:exp} are obtained in three real-world datasets in the UCI machine learning repository~\cite{UCI2007}: \texttt{adult}, \texttt{heart disease} and \texttt{student performance}. 
	For \texttt{adult}, $X$ refers to attribute \texttt{education}, $s_i=$`relationship=Husband', and $s_j=$`relationship=Not-in-family'; for \texttt{heart disease}, $X$ refers to \texttt{oldpeak}, $s_i=$`fbs=0' and $s_j=$`fbs=1'; for \texttt{student performance}, $X$ refers to \texttt{G3} (the final grade), $s_i=$`guardian=mother' and $s_j=$`guardian=father'. 
	Figure~\ref{fig:exp} further evaluates the noise power requirements by comparing the variance of the Laplace mechanism in Theorem~\ref{theorem:AlphaW_Laplace} with that of the Gaussian mechanism in Theorem~\ref{theorem:AlphaW_Gauss} for $\epsilon = 0.1$ (row 4) and $\alpha = 1.5$ (row 5). The results indicate that the Gaussian mechanism requires considerably less noise power than the Laplace mechanism; this advantage is particularly pronounced in the high-privacy regime where the budget $\epsilon$ is small.
	
	The minor irregularities observed in Figure \ref{fig:exp} for the Laplace mechanism (Theorem \ref{theorem:AlphaW_Laplace}) near $\alpha = 1$ arise because the Rényi divergence in \eqref{eq:RD} is undefined at this limit. Consequently, our $\alpha$-Wasserstein mechanisms in Theorems \ref{theorem:AlphaW_Laplace} and \ref{theorem:AlphaW_Gauss} do not apply when $\alpha = 1$.
	Furthermore, the case of $\alpha = 1$ represents an excessive relaxation where $D_1$ (Kullback–Leibler divergence) measures only the average statistical distinguishability. This should generally be avoided in privacy contexts, which focus on preventing worst-case or catastrophic data breaches.
	Figure~\ref{fig:app1} also shows for smaller value of $\alpha$, a larger scale parameter $b$ for Laplace noise should be chosen to satisfy the sufficient condition in Theorem~\ref{theorem:AlphaW_Laplace}.

	\begin{figure}[t]
		\centering
		\makebox[\textwidth][c]{%
			\scalebox{0.58}{
\begin{tikzpicture}

\definecolor{crimson2143940}{RGB}{214,39,40}
\definecolor{gainsboro217}{RGB}{217,217,217}
\definecolor{steelblue31119180}{RGB}{31,119,180}

\begin{groupplot}[group style={group size=2 by 1, horizontal sep=35pt}]
\nextgroupplot[
height=0.32\textwidth,
legend cell align={left},
legend style={
  fill opacity=0.8,
  draw opacity=1,
  text opacity=1,
  at={(0.03,0.97)},
  anchor=north west,
  draw=none
},
tick align=outside,
tick pos=left,
title={\small \texttt{adult}: Laplace, $\epsilon=0.5$},
width=0.43\textwidth,
x grid style={gainsboro217},
xlabel={\(\displaystyle \alpha\)},
xmajorgrids,
xmin=1.01, xmax=5.19,
xtick style={color=black},
y grid style={gainsboro217},
ylabel={Noise parameter},
ymajorgrids,
ymin=1.14239969042281, ymax=3.61723544440241,
ytick style={color=black},
ylabel={\small scale parameter $b$},
]
\addplot [line width=1pt, steelblue31119180]
table {%
1.2 2.39231914051163
1.5 1.48713564024584
2 1.25799393726706
2.5 1.23534677014916
3 1.25947426418328
5 1.44225202732035
};
\addlegendentry{\scriptsize Theorem~\ref{theorem:AlphaW_Laplace}}
\addplot [line width=1pt, crimson2143940, dashed]
table {%
1.2 1.8286722344542
1.5 2.03173853583164
2 2.30075183984945
2.5 2.50929265840026
3 2.67544632267433
5 3.09428836467606
};
\addlegendentry{\scriptsize \cite[Corollary~3.1]{RPP2024}}

\nextgroupplot[
height=0.32\textwidth,
legend cell align={left},
legend style={
  fill opacity=0.8,
  draw opacity=1,
  text opacity=1,
  at={(0.03,0.97)},
  anchor=north west,
  draw=none
},
tick align=outside,
tick pos=left,
title={\small \texttt{adult}: Gaussian, $\epsilon=0.5$},
width=0.43\textwidth,
x grid style={gainsboro217},
xlabel={\(\displaystyle \alpha\)},
xmajorgrids,
xmin=1.01, xmax=5.19,
xtick style={color=black},
y grid style={gainsboro217},
ymajorgrids,
ymin=0.302179465285478, ymax=5.87070531165263,
ytick style={color=black},
ylabel={\small\ standard deviation $\sigma$},
]
\addplot [line width=1pt, steelblue31119180]
table {%
1.2 0.500748821938531
1.5 0.644325125173348
2 0.884693528197383
2.5 1.12048847512556
3 1.34982538979338
5 2.20858438136956
};
\addlegendentry{\scriptsize Theorem~\ref{theorem:AlphaW_Gauss}}
\addplot [line width=1pt, crimson2143940, dashed]
table {%
1.2 2.19089023002066
1.5 2.44948974278318
2 2.82842712474619
2.5 3.16227766016838
3 3.46410161513775
5 4.47213595499958
};
\addlegendentry{\scriptsize \cite[Corollary~3.1]{RPP2024}}
\end{groupplot}

\end{tikzpicture}}%
			\hspace{0.35em}%
			\scalebox{0.58}{
\begin{tikzpicture}

\definecolor{crimson2143940}{RGB}{214,39,40}
\definecolor{gainsboro217}{RGB}{217,217,217}
\definecolor{steelblue31119180}{RGB}{31,119,180}

\begin{groupplot}[group style={group size=2 by 1,horizontal sep=35pt}]
\nextgroupplot[
height=0.32\textwidth,
legend cell align={left},
legend style={
  fill opacity=0.8,
  draw opacity=1,
  text opacity=1,
  at={(0.03,0.97)},
  anchor=north west,
  draw=none
},
tick align=outside,
tick pos=left,
title={\small \texttt{adult}: Laplace, $\epsilon = 1$},
width=0.43\textwidth,
x grid style={gainsboro217},
xlabel={\(\displaystyle \alpha\)},
xmajorgrids,
xmin=1.01, xmax=5.19,
xtick style={color=black},
y grid style={gainsboro217},
ylabel={Noise parameter},
ymajorgrids,
ymin=0.794443472685242, ymax=2.0889762816313,
ytick style={color=black},
ylabel={\small scale parameter $b$},
]
\addplot [line width=0.76pt, steelblue31119180]
table {%
1.2 1.39582038261561
1.5 0.943495452950296
2 0.839649509455518
2.5 0.840823231044057
3 0.865351216392211
5 1.00425802097475
};
\addlegendentry{\scriptsize Theorem~\ref{theorem:AlphaW_Laplace}}
\addplot [line width=0.76pt, crimson2143940, dashed]
table {%
1.2 1.18096861103095
1.5 1.29426918056991
2 1.4306669817387
2.5 1.52528613876966
3 1.5938986401329
5 1.74377024486103
};
\addlegendentry{\scriptsize \cite[Corollary~3.1]{RPP2024}}

\nextgroupplot[
height=0.32\textwidth,
legend cell align={left},
legend style={
  fill opacity=0.8,
  draw opacity=1,
  text opacity=1,
  at={(0.03,0.97)},
  anchor=north west,
  draw=none
},
tick align=outside,
tick pos=left,
title={\small \texttt{adult}: Gaussian, $\epsilon = 1$},
width=0.43\textwidth,
x grid style={gainsboro217},
xlabel={\(\displaystyle \alpha\)},
xmajorgrids,
xmin=1.01, xmax=5.19,
xtick style={color=black},
y grid style={gainsboro217},
ymajorgrids,
ymin=0.251773456162478, ymax=3.80087309845438,
ytick style={color=black},
ylabel={\small\ standard deviation $\sigma$},
]
\addplot [line width=0.76pt, steelblue31119180]
table {%
1.2 0.390368894448473
1.5 0.541761930706538
2 0.779322052156201
2.5 1.00068704202316
3 1.2096914858297
5 1.95444457623132
};
\addlegendentry{\scriptsize Theorem~\ref{theorem:AlphaW_Gauss}}
\addplot [line width=0.76pt, crimson2143940, dashed]
table {%
1.2 1.54919333848297
1.5 1.73205080756888
2 2
2.5 2.23606797749979
3 2.44948974278318
5 3.16227766016838
};
\addlegendentry{\scriptsize \cite[Corollary~3.1]{RPP2024}}
\end{groupplot}

\end{tikzpicture}}%
		}\\[-0.2em]
		\makebox[\textwidth][c]{%
			\scalebox{0.58}{
\begin{tikzpicture}

\definecolor{crimson2143940}{RGB}{214,39,40}
\definecolor{gainsboro217}{RGB}{217,217,217}
\definecolor{steelblue31119180}{RGB}{31,119,180}

\begin{groupplot}[group style={group size=2 by 1, horizontal sep=35pt}]
\nextgroupplot[
height=0.32\textwidth,
legend cell align={left},
legend style={
  fill opacity=0.8,
  draw opacity=1,
  text opacity=1,
  at={(0.03,0.97)},
  anchor=north west,
  draw=none
},
tick align=outside,
tick pos=left,
title={\small \texttt{heart disease}: Laplace, $\epsilon=0.5$},
width=0.435\textwidth,
x grid style={gainsboro217},
xlabel={\(\displaystyle \alpha\)},
xmajorgrids,
xmin=1.01, xmax=5.19,
xtick style={color=black},
y grid style={gainsboro217},
ylabel={\small scale parameter $b$},
ymajorgrids,
ymin=0.890111650956769, ymax=4.3,
ytick style={color=black}
]
\addplot [line width=0.76pt, steelblue31119180]
table {%
1.2 1.63438556434255
1.5 1.06857925100253
2 1.00980715334662
2.5 1.07649170357686
3 1.16581278282208
5 1.52989496770433
};
\addlegendentry{\scriptsize Theorem~\ref{theorem:AlphaW_Laplace}}
\addplot [line width=0.76pt, crimson2143940, dashed]
table {%
1.2 2.01153945789962
1.5 2.23491238941481
2 2.5308270238344
2.5 2.76022192424029
3 2.94299095494182
5 3.40371720114367
};
\addlegendentry{\scriptsize \cite[Corollary~3.1]{RPP2024}}

\nextgroupplot[
height=0.32\textwidth,
legend cell align={left},
legend style={
  fill opacity=0.8,
  draw opacity=1,
  text opacity=1,
  at={(0.03,0.97)},
  anchor=north west,
  draw=none
},
tick align=outside,
tick pos=left,
title={\small \texttt{heart disease}: Gaussian, $\epsilon=0.5$},
width=0.435\textwidth,
x grid style={gainsboro217},
xlabel={\(\displaystyle \alpha\)},
xmajorgrids,
xmin=1.01, xmax=5.19,
xtick style={color=black},
y grid style={gainsboro217},
ymajorgrids,
ymin=0.21652448848311, ymax=6.5329360107175,
ytick style={color=black},
ylabel={\small\ standard deviation $\sigma$},
]
\addplot [line width=0.76pt, steelblue31119180]
table {%
1.2 0.440468539055321
1.5 0.66508626909811
2 0.984183503453434
2.5 1.27624366481985
3 1.55287134705589
5 2.56241104332671
};
\addlegendentry{\scriptsize Theorem~\ref{theorem:AlphaW_Gauss}}
\addplot [line width=0.76pt, crimson2143940, dashed]
table {%
1.2 2.40997925302273
1.5 2.6944387170615
2 3.11126983722081
2.5 3.47850542618522
3 3.81051177665153
5 4.91934955049954
};
\addlegendentry{\scriptsize \cite[Corollary~3.1]{RPP2024}}
\end{groupplot}

\end{tikzpicture}}%
			\hspace{0.35em}%
			\scalebox{0.58}{
\begin{tikzpicture}

\definecolor{crimson2143940}{RGB}{214,39,40}
\definecolor{gainsboro217}{RGB}{217,217,217}
\definecolor{steelblue31119180}{RGB}{31,119,180}

\begin{groupplot}[group style={group size=2 by 1,horizontal sep=35pt}]
\nextgroupplot[
height=0.32\textwidth,
legend cell align={left},
legend style={
	fill opacity=0.8,
	draw opacity=1,
	text opacity=1,
	at={(0.03,0.97)},
	anchor=north west,
	draw=none
},
tick align=outside,
tick pos=left,
title={\small \texttt{heart disease}: Laplace, $\epsilon = 1$},
width=0.45\textwidth,
x grid style={gainsboro217},
xlabel={\(\displaystyle \alpha\)},
xmajorgrids,
xmin=1.01, xmax=5.19,
xtick style={color=black},
y grid style={gainsboro217},
ylabel={Noise parameter},
ymajorgrids,
ymin=0.699315769793108, ymax=2.37618686456399,
ytick style={color=black},
ylabel={\small scale parameter $b$},
]
\addplot [line width=0.76pt, steelblue31119180]
table {%
1.2 0.980612296185515
1.5 0.757355365009966
2 0.777208521881389
2.5 0.845442512893631
3 0.917936980622597
5 1.16458472572664
};
\addlegendentry{\scriptsize Theorem~\ref{theorem:AlphaW_Laplace}}
\addplot [line width=0.76pt, crimson2143940, dashed]
table {%
1.2 1.29906547213405
1.5 1.4236960986269
2 1.57373367991257
2.5 1.67781475264668
3 1.7532885041462
5 1.91814726934713
};
\addlegendentry{\scriptsize \cite[Corollary~3.1]{RPP2024}}

\nextgroupplot[
height=0.32\textwidth,
legend cell align={left},
legend style={
  fill opacity=0.8,
  draw opacity=1,
  text opacity=1,
  at={(0.03,0.97)},
  anchor=north west,
  draw=none
},
tick align=outside,
tick pos=left,
title={\small \texttt{heart disease}: Gaussian, $\epsilon = 1$},
width=0.45\textwidth,
x grid style={gainsboro217},
xlabel={\(\displaystyle \alpha\)},
xmajorgrids,
xmin=1.01, xmax=5.19,
xtick style={color=black},
y grid style={gainsboro217},
ymajorgrids,
ymin=0.235054609114878, ymax=4.63295546509333,
ytick style={color=black},
ylabel={\small\ standard deviation $\sigma$},
]
\addplot [line width=0.76pt, steelblue31119180]
table {%
1.2 0.389504648022989
1.5 0.6026868491814
2 0.896550690239394
2.5 1.1594532594187
3 1.40349033003039
5 2.25450405986553
};
\addlegendentry{\scriptsize Theorem~\ref{theorem:AlphaW_Gauss}}
\addplot [line width=0.76pt, crimson2143940, dashed]
table {%
1.2 1.70411267233126
1.5 1.90525588832577
2 2.2
2.5 2.45967477524977
3 2.6944387170615
5 3.47850542618522
};
\addlegendentry{\scriptsize \cite[Corollary~3.1]{RPP2024}}
\end{groupplot}

\end{tikzpicture}}%
		}\\[-0.2em]
		\makebox[\textwidth][c]{%
			\scalebox{0.58}{
\begin{tikzpicture}

\definecolor{crimson2143940}{RGB}{214,39,40}
\definecolor{gainsboro217}{RGB}{217,217,217}
\definecolor{steelblue31119180}{RGB}{31,119,180}

\begin{groupplot}[group style={group size=2 by 1,horizontal sep=40pt}]
\nextgroupplot[
height=0.32\textwidth,
legend cell align={left},
legend style={
  fill opacity=0.8,
  draw opacity=1,
  text opacity=1,
  at={(0.03,0.97)},
  anchor=north west,
  draw=none
},
tick align=outside,
tick pos=left,
title={\small \texttt{student performance}: Laplace, $\epsilon = 0.5$},
width=0.43\textwidth,
x grid style={gainsboro217},
xlabel={\(\displaystyle \alpha\)},
xmajorgrids,
xmin=1.01, xmax=5.19,
xtick style={color=black},
y grid style={gainsboro217},
ylabel={Noise parameter},
ymajorgrids,
ymin=1.98395836257918, ymax=9.80961436640973,
ytick style={color=black},
ylabel={\small scale parameter $b$},
]
\addplot [line width=0.76pt, steelblue31119180]
table {%
1.2 4.20762024250412
1.5 2.55474034177961
2 2.25785181729875
2.5 2.35085245627729
3 2.52271854938532
5 3.2962019866403
};
\addlegendentry{\scriptsize Theorem~\ref{theorem:AlphaW_Laplace}}
\addplot [line width=0.76pt, crimson2143940, dashed]
table {%
1.2 4.57168058613549
1.5 5.07934633957911
2 5.75187959962364
2.5 6.27323164600065
3 6.68861580668657
5 7.73572091169016
};
\addlegendentry{\scriptsize \cite[Corollary~3.1]{RPP2024}}

\nextgroupplot[
height=0.32\textwidth,
legend cell align={left},
legend style={
  fill opacity=0.8,
  draw opacity=1,
  text opacity=1,
  at={(0.03,0.97)},
  anchor=north west,
  draw=none
},
tick align=outside,
tick pos=left,
title={\small \texttt{student performance}: Gaussian, $\epsilon = 0.5$},
width=0.43\textwidth,
x grid style={gainsboro217},
xlabel={\(\displaystyle \alpha\)},
xmajorgrids,
xmin=1.01, xmax=5.19,
xtick style={color=black},
y grid style={gainsboro217},
ymajorgrids,
ymin=0.457665848790875, ymax=13.909434131517,
ytick style={color=black},
ylabel={\small\ standard deviation $\sigma$},
]
\addplot [line width=0.76pt, steelblue31119180]
table {%
1.2 0.96826937444364
1.5 1.453914506902
2 2.15781522144237
2.5 2.80574617940061
3 3.42120960476035
5 5.6772820502091
};
\addlegendentry{\scriptsize Theorem~\ref{theorem:AlphaW_Gauss}}
\addplot [line width=0.76pt, crimson2143940, dashed]
table {%
1.2 5.47722557505166
1.5 6.12372435695795
2 7.07106781186548
2.5 7.90569415042095
3 8.66025403784439
5 11.1803398874989
};
\addlegendentry{\scriptsize \cite[Corollary~3.1]{RPP2024}}
\end{groupplot}

\end{tikzpicture}}%
			\hspace{0.35em}%
			\scalebox{0.58}{
\begin{tikzpicture}

\definecolor{crimson2143940}{RGB}{214,39,40}
\definecolor{gainsboro217}{RGB}{217,217,217}
\definecolor{steelblue31119180}{RGB}{31,119,180}

\begin{groupplot}[group style={group size=2 by 1}]
\nextgroupplot[
height=0.32\textwidth,
legend cell align={left},
legend style={
	fill opacity=0.8,
	draw opacity=1,
	text opacity=1,
	at={(0.03,0.97)},
	anchor=north west,
	draw=none
},
tick align=outside,
tick pos=left,
title={\small \texttt{student performance}: Laplace, $\epsilon = 1$},
width=0.43\textwidth,
x grid style={gainsboro217},
xlabel={\(\displaystyle \alpha\)},
xmajorgrids,
xmin=1.01, xmax=5.19,
xtick style={color=black},
y grid style={gainsboro217},
ylabel={Noise parameter},
ymajorgrids,
ymin=1.54793170396223, ymax=5.49330627444454,
ytick style={color=black},
ylabel={\small scale parameter $b$},
]
\addplot [line width=0.76pt, steelblue31119180]
table {%
1.2 2.39878184848338
1.5 1.69338886297406
2 1.68181236625688
2.5 1.82046194857592
3 1.97772119198418
5 2.53476669794203
};
\addlegendentry{\scriptsize Theorem~\ref{theorem:AlphaW_Laplace}}
\addplot [line width=0.76pt, crimson2143940, dashed]
table {%
1.2 2.95242152757738
1.5 3.23567295142477
2 3.57666745434674
2.5 3.8132153469249
3 3.98474660033226
5 4.35942561214989
};
\textbf{\addlegendentry{$W_\infty$-Laplace}}

\nextgroupplot[
height=0.32\textwidth,
legend cell align={left},
legend style={
  fill opacity=0.8,
  draw opacity=1,
  text opacity=1,
  at={(0.03,0.97)},
  anchor=north west,
  draw=none
},
tick align=outside,
tick pos=left,
title={\small \texttt{student performance}: Gaussian, $\epsilon = 1$},
width=0.43\textwidth,
x grid style={gainsboro217},
xlabel={\(\displaystyle \alpha\)},
xmajorgrids,
xmin=1.01, xmax=5.19,
xtick style={color=black},
y grid style={gainsboro217},
ymajorgrids,
ymin=0.498705575729379, ymax=9.25840789207293,
ytick style={color=black},
ylabel={\small\ standard deviation $\sigma$},
]
\addplot [line width=0.76pt, steelblue31119180]
table {%
1.2 0.851419317381358
1.5 1.32138656293612
2 1.97523628626252
2.5 2.56270247932701
3 3.1095754442187
5 5.02563454864507
};
\addlegendentry{\scriptsize Theorem~\ref{theorem:AlphaW_Gauss}}
\addplot [line width=0.76pt, crimson2143940, dashed]
table {%
1.2 3.87298334620742
1.5 4.33012701892219
2 5
2.5 5.59016994374947
3 6.12372435695795
5 7.90569415042095
};
\addlegendentry{\scriptsize \cite[Corollary~3.1]{RPP2024}}
\end{groupplot}

\end{tikzpicture}}%
		}\\[-0.1em]
		\makebox[\textwidth][c]{%
			\scalebox{0.66}{
\begin{tikzpicture}

\definecolor{crimson2143940}{RGB}{214,39,40}
\definecolor{gainsboro217}{RGB}{217,217,217}
\definecolor{steelblue31119180}{RGB}{31,119,180}

\begin{axis}[
height=0.33\textwidth,
width=0.46\textwidth,
legend cell align={left},
legend style={fill opacity=0.8, draw opacity=1, text opacity=1, draw=none},
log basis y={10},
tick align=outside,
tick pos=left,
title={\small \texttt{adult}, $\epsilon = 0.1$},
x grid style={gainsboro217},
xlabel={$\alpha$},
xmajorgrids,
xmin=1.01, xmax=5.19,
xminorgrids,
xtick style={color=black},
y grid style={gainsboro217},
ylabel={noise variance},
ymajorgrids,
ymin=0.795723907959296, ymax=350.551585351654,
yminorgrids,
ymode=log,
ytick style={color=black},
ytick={0.01,0.1,1,10,100,1000,10000},
yticklabels={
  \(\displaystyle {10^{-2}}\),
  \(\displaystyle {10^{-1}}\),
  \(\displaystyle {10^{0}}\),
  \(\displaystyle {10^{1}}\),
  \(\displaystyle {10^{2}}\),
  \(\displaystyle {10^{3}}\),
  \(\displaystyle {10^{4}}\)
}
]
\addplot [line width=1pt, brown,mark=*]
table {%
1.2 205.380886050857
1.5 58.8610834185057
2 31.7955048336571
2.5 26.0484809537218
3 24.353931756565
5 26.1766101279048
};
\addlegendentry{\scriptsize Laplace: Theorem~\ref{theorem:AlphaW_Laplace}}
\addplot [line width=0.72pt, cyan, mark=square]
table {%
1.2 1.03659691595646
1.5 1.39856714454723
2 2.08957818894023
2.5 2.89234609142838
3 3.80969684382309
5 8.64207385059183
};
\addlegendentry{\scriptsize Gaussian: Theorem~\ref{theorem:AlphaW_Gauss}}
\end{axis}

\end{tikzpicture}}%
			\hfill%
			\scalebox{0.66}{
\begin{tikzpicture}

\definecolor{crimson2143940}{RGB}{214,39,40}
\definecolor{gainsboro217}{RGB}{217,217,217}
\definecolor{steelblue31119180}{RGB}{31,119,180}

\begin{axis}[
height=0.33\textwidth,
width=0.45\textwidth,
legend cell align={left},
legend style={fill opacity=0.8, draw opacity=1, text opacity=1, draw=none},
log basis y={10},
tick align=outside,
tick pos=left,
title={\texttt{heart disease}, $\epsilon = 0.1$},
x grid style={gainsboro217},
xlabel={$\alpha$},
xmajorgrids,
xmin=1.01, xmax=5.19,
xminorgrids,
xtick style={color=black},
y grid style={gainsboro217},
ylabel={noise variance},
ymajorgrids,
ymin=0.353747276024994, ymax=200.916735250963,
yminorgrids,
ymode=log,
ytick style={color=black},
ytick={0.01,0.1,1,10,100,1000,10000},
yticklabels={
  \(\displaystyle {10^{-2}}\),
  \(\displaystyle {10^{-1}}\),
  \(\displaystyle {10^{0}}\),
  \(\displaystyle {10^{1}}\),
  \(\displaystyle {10^{2}}\),
  \(\displaystyle {10^{3}}\),
  \(\displaystyle {10^{4}}\)
}
]
\addplot [line width=1pt, brown,mark=*]
table {%
1.2 100.78796508682
1.5 27.9012901575314
2 14.8400898496184
2.5 12.3887500132908
3 12.0200059428779
5 15.5337521757678
};
\addlegendentry{\scriptsize Laplace: Theorem~\ref{theorem:AlphaW_Laplace}}
\addplot [line width=0.72pt, cyan, mark=square]
table {%
1.2 0.463003551236902
1.5 0.829051403595203
2 1.61677111581274
2.5 2.60861734510517
3 3.79284677078782
5 10.2974709216597
};
\addlegendentry{\scriptsize Gaussian: Theorem~\ref{theorem:AlphaW_Gauss}}
\end{axis}

\end{tikzpicture}}%
			\hfill%
			\scalebox{0.66}{
\begin{tikzpicture}

\definecolor{crimson2143940}{RGB}{214,39,40}
\definecolor{gainsboro217}{RGB}{217,217,217}
\definecolor{steelblue31119180}{RGB}{31,119,180}

\begin{axis}[
height=0.33\textwidth,
width=0.45\textwidth,
legend cell align={left},
legend style={fill opacity=0.8, draw opacity=1, text opacity=1, draw=none},
log basis y={10},
tick align=outside,
tick pos=left,
title={\texttt{student performance}, $\epsilon = 0.1$},
x grid style={gainsboro217},
xlabel={$\alpha$},
xmajorgrids,
xmin=1.01, xmax=5.19,
xminorgrids,
xtick style={color=black},
y grid style={gainsboro217},
ylabel={noise variance},
ymajorgrids,
ymin=1.96320727181573, ymax=1500.324130824725,
yminorgrids,
ymode=log,
ytick style={color=black},
ytick={0.1,1,10,100,1000,10000},
yticklabels={
  \(\displaystyle {10^{-1}}\),
  \(\displaystyle {10^{0}}\),
  \(\displaystyle {10^{1}}\),
  \(\displaystyle {10^{2}}\),
  \(\displaystyle {10^{3}}\),
  \(\displaystyle {10^{4}}\)
}
]
\addplot [line width=1pt, brown,mark=*]
table {%
1.2 708.142306640798
1.5 192.864931130131
2 98.3559339173924
2.5 78.017187955784
3 71.9269294576667
5 80.6272262699803
};
\addlegendentry{\scriptsize Laplace: Theorem~\ref{theorem:AlphaW_Laplace}}
\addplot [line width=0.72pt, cyan, mark=square]
table {%
1.2 2.5985759252439
1.5 4.18183136248868
2 7.81287984571231
2.5 12.4898612019643
3 18.1228713502535
5 49.3759311438741
};
\addlegendentry{\scriptsize Gaussian: Theorem~\ref{theorem:AlphaW_Gauss}}
\end{axis}

\end{tikzpicture}}%
		}\\[-0.1em]
		\makebox[\textwidth][c]{%
			\scalebox{0.66}{
\begin{tikzpicture}

\definecolor{crimson2143940}{RGB}{214,39,40}
\definecolor{gainsboro217}{RGB}{217,217,217}
\definecolor{steelblue31119180}{RGB}{31,119,180}

\begin{axis}[
height=0.33\textwidth,
width=0.45\textwidth,
legend cell align={left},
legend style={fill opacity=0.8, draw opacity=1, text opacity=1, draw=none},
log basis y={10},
tick align=outside,
tick pos=left,
title={\small \texttt{adult}, $\alpha = 1.5$},
x grid style={gainsboro217},
xlabel={$\epsilon$},
xmajorgrids,
xmin=-0.145, xmax=5.245,
xminorgrids,
xtick style={color=black},
y grid style={gainsboro217},
ylabel={noise variance},
ymajorgrids,
ymin=0.126930897117653, ymax=78.8484311320775,
yminorgrids,
ymode=log,
ytick style={color=black},
ytick={0.01,0.1,1,10,100,1000},
yticklabels={
  \(\displaystyle {10^{-2}}\),
  \(\displaystyle {10^{-1}}\),
  \(\displaystyle {10^{0}}\),
  \(\displaystyle {10^{1}}\),
  \(\displaystyle {10^{2}}\),
  \(\displaystyle {10^{3}}\)
}
]
\addplot [line width=1pt, brown,mark=*]
table {%
0.1 58.8610834185057
0.5 4.42314482497879
1 1.78036733947577
1.5 1.09877878261293
2 0.793137711070003
2.5 0.620220747552585
3 0.509028268222028
3.5 0.43151913842222
4 0.374416363855739
4.5 0.330616787572618
5 0.295970511184981
};
\addlegendentry{\scriptsize Laplace: Theorem~\ref{theorem:AlphaW_Laplace}}
\addplot [line width=0.72pt, cyan, mark=square]
table {%
0.1 1.39856714454723
0.5 0.415154866929651
1 0.293505989562875
1.5 0.251098884577841
2 0.227753572866357
2.5 0.212051449356398
3 0.200274911214614
3.5 0.190832521447386
4 0.18291918636135
4.5 0.176079127134049
5 0.170032583816959
};
\addlegendentry{\scriptsize Gaussian: Theorem~\ref{theorem:AlphaW_Gauss}}
\end{axis}

\end{tikzpicture}}%
			\hfill%
			\scalebox{0.66}{
\begin{tikzpicture}

\definecolor{crimson2143940}{RGB}{214,39,40}
\definecolor{gainsboro217}{RGB}{217,217,217}
\definecolor{steelblue31119180}{RGB}{31,119,180}

\begin{axis}[
height=0.33\textwidth,
width=0.45\textwidth,
legend cell align={left},
legend style={fill opacity=0.8, draw opacity=1, text opacity=1, draw=none},
log basis y={10},
tick align=outside,
tick pos=left,
title={\small \texttt{heart disease}, $\alpha = 1.5$},
x grid style={gainsboro217},
xlabel={$\epsilon$},
xmajorgrids,
xmin=-0.145, xmax=5.245,
xminorgrids,
xtick style={color=black},
y grid style={gainsboro217},
ylabel={noise variance},
ymajorgrids,
ymin=0.1797670909175, ymax=35.4775424332157,
yminorgrids,
ymode=log,
ytick style={color=black},
ytick={0.01,0.1,1,10,100,1000},
yticklabels={
  \(\displaystyle {10^{-2}}\),
  \(\displaystyle {10^{-1}}\),
  \(\displaystyle {10^{0}}\),
  \(\displaystyle {10^{1}}\),
  \(\displaystyle {10^{2}}\),
  \(\displaystyle {10^{3}}\)
}
]
\addplot [line width=1pt, brown,mark=*]
table {%
0.1 27.9012901575314
0.5 2.28372323134626
1 1.14717429781876
1.5 0.83436075926626
2 0.679559722295685
2.5 0.582882608645755
3 0.514636590677686
3.5 0.462745145451073
4 0.421304150197265
4.5 0.387052447898682
5 0.358024402936379
};
\addlegendentry{\scriptsize Laplace: Theorem~\ref{theorem:AlphaW_Laplace}}
\addplot [line width=0.72pt, cyan, mark=square]
table {%
0.1 0.829051403595203
0.5 0.442339745342844
1 0.363231438176204
1.5 0.325759578398563
2 0.301426177563062
2.5 0.283342818386077
3 0.268866372155128
3.5 0.25672653250543
4 0.246223138311188
4.5 0.236932603564
5 0.228580634089418
};
\addlegendentry{\scriptsize Gaussian: Theorem~\ref{theorem:AlphaW_Gauss}}
\end{axis}

\end{tikzpicture}}%
			\hfill%
			\scalebox{0.66}{
\begin{tikzpicture}

\definecolor{crimson2143940}{RGB}{214,39,40}
\definecolor{gainsboro217}{RGB}{217,217,217}
\definecolor{steelblue31119180}{RGB}{31,119,180}

\begin{axis}[
height=0.33\textwidth,
width=0.45\textwidth,
legend cell align={left},
legend style={fill opacity=0.8, draw opacity=1, text opacity=1, draw=none},
log basis y={10},
tick align=outside,
tick pos=left,
title={\small \texttt{student performance}, $\alpha = 1.5$},
x grid style={gainsboro217},
xlabel={$\epsilon$},
xmajorgrids,
xmin=-0.145, xmax=5.245,
xminorgrids,
xtick style={color=black},
y grid style={gainsboro217},
ylabel={noise variance},
ymajorgrids,
ymin=0.87100463032372, ymax=249.419853057066,
yminorgrids,
ymode=log,
ytick style={color=black},
ytick={0.01,0.1,1,10,100,1000,10000},
yticklabels={
  \(\displaystyle {10^{-2}}\),
  \(\displaystyle {10^{-1}}\),
  \(\displaystyle {10^{0}}\),
  \(\displaystyle {10^{1}}\),
  \(\displaystyle {10^{2}}\),
  \(\displaystyle {10^{3}}\),
  \(\displaystyle {10^{4}}\)
}
]
\addplot [line width=1pt, brown,mark=*]
table {%
0.1 192.864931130131
0.5 13.0533964278324
1 5.73513168248918
1.5 3.97908523525317
2 3.18205443970638
2.5 2.70932260217813
3 2.38613882847324
3.5 2.14540032226771
4 1.95569055661166
4.5 1.8002299377814
5 1.6691738644729
};
\addlegendentry{\scriptsize Laplace: Theorem~\ref{theorem:AlphaW_Laplace}}
\addplot [line width=0.72pt, cyan, mark=square]
table {%
0.1 4.18183136248868
0.5 2.11386739338009
1 1.74606244870814
1.5 1.57444232464422
2 1.46308439496306
2.5 1.38012743016264
3 1.31348879950976
3.5 1.25739685685568
4 1.20868243041099
4.5 1.1654334353279
5 1.12641445821368
};
\addlegendentry{\scriptsize Gaussian: Theorem~\ref{theorem:AlphaW_Gauss}}
\end{axis}

\end{tikzpicture}}%
		}
		\caption{
			Experimental results using \texttt{adult}, \texttt{heart disease} and \texttt{student performance} datasets from UCI machine learning repository~\cite{UCI2007}: rows 1-3 compare Theorems~\ref{theorem:AlphaW_Laplace} and \ref{theorem:AlphaW_Gauss} to $W_\infty$ based mechanism in \cite[Corollary~3.1]{RPP2024} for $\epsilon = 0.5, 1$; rows 4-5 show noise reduction by Gaussian mechanism in Theorem~\ref{theorem:AlphaW_Gauss} as compared to Laplace mechanism in Theorems~\ref{theorem:AlphaW_Laplace}.		
		} \label{fig:exp}
	\end{figure}

	\section{Conclusion}
	\label{sec:future}
	
	We investigated the calibration of Wasserstein mechanisms to achieve $(\alpha, \epsilon)$-R\'{e}nyi pufferfish privacy. We proposed an $\alpha$-Wasserstein mechanism where the parameters for Laplace and Gaussian noise are calibrated using an upper-bounded $W_\alpha$ metric of the same order $\alpha$.
	Experimental results demonstrate that our $\alpha$-Wasserstein mechanism significantly reduces noise compared to existing $W_\infty$-based approaches. The results further verify that the Gaussian mechanism offers superior data utility over the Laplace mechanism when utilizing the Rényi divergence as a privacy relaxation.
	
	\paragraph{Discussion}
	The primary results of this paper leverage H\"{o}lder’s inequality for the conjugate exponents $\alpha$ and $\frac{\alpha}{\alpha-1}$. This established technique is a staple of information theory, used in generalized error bounds \cite{Esposito2020_CONF, Esposito2021_JOURNAL} , entropy power inequalities \cite{Rioul2011, Rioul2020ISITA},  and foundational bounds on guessing entropy \cite{Massey1994Guess, Arikan1996Guess}.. Furthermore, Rényi measures of order $\frac{\alpha}{\alpha-1}$ have recently gained prominence in information-theoretic privacy \cite{Liao2019_AlphaLeak, Ding2025ITW}. Beyond its core application to differential and pufferfish privacy, we highlight several promising extensions for future work.
	
	\emph{Closed-form Solution:}
	For \eqref{eq:theorem:AlphaW_Laplace}, find an invertible function $f$ such that $\int e^{\alpha\frac{|x-x'|}{b}}  \dif \pi^*(x,x') \leq f_\alpha(b)$, and compute scale parameter $b = f_{\alpha}^{-1}(e^{(\alpha-1) \epsilon})$, we obtain a closed-form sufficient condition. 
	
	\emph{Range $\alpha \in (0,1)$:}
	It is not difficult to derive the sufficient condition $\int e^{-\alpha\frac{|x-x'|}{b}}  \dif \pi(x,x') \geq  e^{(\alpha-1) \epsilon}, \forall \rho, (s_i,s_j) \in \SetPair$ for attaining $(\alpha,\epsilon)$-R\'{e}nyi pufferfish privacy for $\alpha \in (0,1)$ by Laplace mechanism.  See Proposition~\ref{prop:Cond_AlpahZeroOne} in Appendix~\ref{app:Cond_AlpahZeroOne}.  
	However, the operational interpretation for R\'{e}nyi pufferfish (and differential) privacy in range $\alpha \in (0,1)$ should be studied first.
	
	\emph{$W_2$ Mechanism for Gaussian Noise and Gaussian Priors:}
	For \eqref{eq:AlphaW_Gauss_Raw}, we have the sufficient condition
	$ D_{\alpha}  (P_{Y|s_i,\rho} \|  P_{Y|s_j,\rho}) \leq \frac{1}{\alpha-1} \log \inf_{\pi} \int  e^{   \big( \frac{ \alpha (x-x')}{\sqrt{2}\sigma} \big)^2}  \dif \pi(x,x') \leq \epsilon, $
	equivalent to $W_2(P_{Y|s_i,\rho}, P_{Y|s_j,\rho}) \leq e^{\frac{\alpha-1}{2}\epsilon}$ for $d(z) = e^{   \big( \frac{ \alpha z }{\sqrt{2}\sigma} \big)^2}, \forall z \in \Real $. 
	For $P_{X|s_i}$ and $P_{X|s_j}$ being Gaussian distributions, the value of $W_2$ is determined by Monge's formulation~\cite{Dowson1982Frechet,Givens1984NormClass,Takatsu2011Wnorm}. It is worth discussing if meaningful results can be derived.

%
%
	
	{
		\small
		\bibliographystyle{nips}
		\bibliography{BIB}

\begin{thebibliography}{10}

\bibitem{CalibNoiseDP}
Dwork, C., F.~McSherry, K.~Nissim, et~al.
\newblock Calibrating noise to sensitivity in private data analysis.
\newblock In S.~Halevi, T.~Rabin, eds., \emph{Theory of Cryptography}, pages
  265--284. Springer Berlin Heidelberg, Berlin, Heidelberg, 2006.

\bibitem{Dwork2006}
Dwork, C.
\newblock Differential privacy.
\newblock In M.~Bugliesi, B.~Preneel, V.~Sassone, I.~Wegener, eds.,
  \emph{Automata, Languages and Programming}, pages 1--12. Springer Berlin
  Heidelberg, Berlin, Heidelberg, 2006.

\bibitem{Wasserman2010}
Wasserman, L., S.~Zhou.
\newblock A statistical framework for differential privacy.
\newblock \emph{Journal of the American Statistical Association},
  105(489):375--389, 2010.

\bibitem{Abowd2018USCensus}
Abowd, J.~M.
\newblock The u.s. census bureau adopts differential privacy.
\newblock In \emph{Proceedings of the 24th ACM SIGKDD International Conference
  on Knowledge Discovery and Data Mining}, KDD ’18, pages 2867--2867. ACM,
  2018.

\bibitem{Abadi2016DPSGD}
Abadi, M., A.~Chu, I.~Goodfellow, et~al.
\newblock Deep learning with differential privacy.
\newblock In \emph{Proceedings of the 2016 ACM SIGSAC Conference on Computer
  and Communications Security}, pages 308--318. ACM, 2016.

\bibitem{Mohammadi2025Medicine}
Mohammadi, M., M.~Vejdanihemmat, M.~Lotfinia, et~al.
\newblock Differential privacy for deep learning in medicine.
\newblock \emph{arXiv e-prints}, pages arXiv--2506, 2025.

\bibitem{Pufferfish2012KiferConf}
Kifer, D., A.~Machanavajjhala.
\newblock A rigorous and customizable framework for privacy.
\newblock In \emph{Proceedings of the 31st ACM SIGMOD-SIGACT-SIGAI Symposium on
  Principles of Database Systems}, PODS '12, page 77–88. Association for
  Computing Machinery, New York, NY, USA, 2012.

\bibitem{Pufferfish2014Kifer}
---.
\newblock Pufferfish: A framework for mathematical privacy definitions.
\newblock \emph{ACM Transactions on Database Systems}, 39(1), 2014.

\bibitem{PufferfishWasserstein2017Song}
Song, S., Y.~Wang, K.~Chaudhuri.
\newblock Pufferfish privacy mechanisms for correlated data.
\newblock In \emph{Proceedings of the 2017 ACM International Conference on
  Management of Data}, page 1291–1306. New York, NY, USA, 2017.

\bibitem{Champion2008InfW}
Champion, T., L.~De~Pascale, P.~Juutinen.
\newblock The $\infty$-{W}asserstein distance: Local solutions and existence of
  optimal transport maps.
\newblock \emph{SIAM Journal on Mathematical Analysis}, 40(1):1--20, 2008.

\bibitem{DePascale2019InfW}
{De Pascale}, L., J.~Louet.
\newblock A study of the dual problem of the one-dimensional
  $l_{\infty}$-optimal transport problem with applications.
\newblock \emph{Journal of Functional Analysis}, 276(11):3304--3324, 2019.

\bibitem{Ding2022AISTATS}
Ding, N.
\newblock Kantorovich mechanism for pufferfish privacy.
\newblock In G.~Camps-Valls, F.~J.~R. Ruiz, I.~Valera, eds., \emph{Proceedings
  of The 25th International Conference on Artificial Intelligence and
  Statistics}, vol. 151 of \emph{Proceedings of Machine Learning Research},
  pages 5084--5103. PMLR, 2022.

\bibitem{SoriaComas2017}
Soria-Comas, J., J.~Domingo-Ferrer, D.~Sanchez, et~al.
\newblock Individual differential privacy: A utility-preserving formulation of
  differential privacy guarantees.
\newblock \emph{IEEE Transactions on Information Forensics and Security},
  12(6):1418--1429, 2017.

\bibitem{Li2024}
Li, B., W.~Wang, P.~Ye.
\newblock The limits of differential privacy in online learning.
\newblock In \emph{Advances in Neural Information Processing Systems 37},
  NeurIPS 2024, pages 65328--65360. Neural Information Processing Systems
  Foundation, Inc. (NeurIPS), 2024.

\bibitem{RDP2017}
{Mironov}, I.
\newblock Rényi differential privacy.
\newblock In \emph{2017 IEEE 30th Computer Security Foundations Symposium
  (CSF)}, pages 263--275. 2017.

\bibitem{RPP2024}
Pierquin, C., A.~Bellet, M.~Tommasi, et~al.
\newblock {R{\'e}nyi Pufferfish Privacy: General Additive Noise Mechanisms and
  Privacy Amplification by Iteration via Shift Reduction Lemmas}.
\newblock In \emph{{International Conference on Machine Learning (ICML 2024)}}.
  Vienna (Austria), Austria, 2024.

\bibitem{Renyi1961_Measures}
R{\'e}nyi, A.
\newblock On measures of entropy and information.
\newblock In \emph{Proceedings of the Fourth Berkeley Symposium on Mathematical
  Statistics and Probability, Volume 1: Contributions to the Theory of
  Statistics}, vol.~4, pages 547--562. University of California Press, 1961.

\bibitem{Erven2014_JOURNAL}
van Erven, T., P.~Harremoes.
\newblock {R{\'{e}}nyi} divergence and {Kullback-Leibler} divergence.
\newblock \emph{{IEEE} Transactions on Information Theory}, 60(7):3797--3820,
  2014.

\bibitem{Villani2009OPT}
Villani, C.
\newblock \emph{Optimal transport: old and new}, vol. 338.
\newblock Springer, 2009.

\bibitem{Santambrogio2015OPT}
Santambrogio, F.
\newblock Optimal transport for applied mathematicians.
\newblock \emph{Birk{\"a}user, NY}, 55(58-63):94, 2015.

\bibitem{Yang2026Noise}
Yang, W., N.~Ding, Z.~Zhang, et~al.
\newblock Noise reduction for pufferfish privacy: A practical noise calibration
  method.
\newblock \emph{arXiv preprint arXiv:2601.06385}, 2026.

\bibitem{Ding2025Multi}
Ding, N., S.~Lu, W.~Yang, et~al.
\newblock Multi-user pufferfish privacy.
\newblock \emph{arXiv preprint arXiv:2512.18632}, 2025.

\bibitem{Brent1971}
Brent, R.~P.
\newblock An algorithm with guaranteed convergence for finding a zero of a
  function.
\newblock \emph{The Computer Journal}, 14(4):422--425, 1971.

\bibitem{Suli2003book}
S{\"u}li, E., D.~F. Mayers.
\newblock \emph{An introduction to numerical analysis}.
\newblock Cambridge university press, 2003.

\bibitem{Feldman2018FOCS}
Feldman, V., I.~Mironov, K.~Talwar, et~al.
\newblock Privacy amplification by iteration.
\newblock In \emph{2018 IEEE 59th Annual Symposium on Foundations of Computer
  Science (FOCS)}, pages 521--532. IEEE, 2018.

\bibitem{Dwork2014book}
Dwork, C., A.~Roth, et~al.
\newblock The algorithmic foundations of differential privacy.
\newblock \emph{Found. Trends Theor. Comput. Sci.}, 9(3-4):211--407, 2014.

\bibitem{Balle2018AnalyticGM}
Balle, B., Y.-X. Wang.
\newblock Improving the {G}aussian mechanism for differential privacy:
  Analytical calibration and optimal denoising.
\newblock In J.~Dy, A.~Krause, eds., \emph{Proceedings of the 35th
  International Conference on Machine Learning}, vol.~80 of \emph{Proceedings
  of Machine Learning Research}, pages 394--403. PMLR, 2018.

\bibitem{Mironov2019RDP}
Mironov, I., K.~Talwar, L.~Zhang.
\newblock R\'{e}nyi differential privacy of the sampled gaussian mechanism.
\newblock \emph{arXiv preprint arXiv:1908.10530}, 2019.

\bibitem{UCI2007}
Asuncion, A., D.~Newman.
\newblock {UCI} machine learning repository
  {https://archive.ics.uci.edu/ml/index.php}, 2007.

\bibitem{Esposito2020_CONF}
Esposito, A.~R., M.~Gastpar, I.~Issa.
\newblock Robust generalization via $f$-mutual information.
\newblock pages 2723--2728, 2020.

\bibitem{Esposito2021_JOURNAL}
---.
\newblock Generalization error bounds via {R{\'{e}}nyi}-, $f$-divergences and
  maximal leakage.
\newblock \emph{{IEEE} Transactions on Information Theory}, 67(8):4986--5004,
  2021.

\bibitem{Rioul2011}
Rioul, O.
\newblock Information theoretic proofs of entropy power inequalities.
\newblock \emph{IEEE Transactions on Information Theory}, 57(1):33--55, 2011.

\bibitem{Rioul2020ISITA}
---.
\newblock Rényi entropy power and normal transport.
\newblock In \emph{2020 International Symposium on Information Theory and Its
  Applications (ISITA)}, pages 1--5. 2020.

\bibitem{Massey1994Guess}
Massey, J.
\newblock Guessing and entropy.
\newblock In \emph{Proceedings of 1994 IEEE International Symposium on
  Information Theory}, ISIT-94, page 204. IEEE.

\bibitem{Arikan1996Guess}
Arikan, E.
\newblock An inequality on guessing and its application to sequential decoding.
\newblock \emph{IEEE Transactions on Information Theory}, 42(1):99--105, 1996.

\bibitem{Liao2019_AlphaLeak}
Liao, J., O.~Kosut, L.~Sankar, et~al.
\newblock Tunable measures for information leakage and applications to
  privacy-utility tradeoffs.
\newblock \emph{IEEE Transactions on Information Theory}, 65(12):8043--8066,
  2019.

\bibitem{Ding2025ITW}
Ding, N., F.~Farokhi, T.~Guo, et~al.
\newblock $\alpha$-leakage interpretation of sibson mutual information and
  r\'{e}nyi capacity.
\newblock In \emph{2025 IEEE Information Theory Workshop (ITW)}, pages
  752--757. IEEE, 2025.

\bibitem{Dowson1982Frechet}
Dowson, D., B.~Landau.
\newblock The fr{\'e}chet distance between multivariate normal distributions.
\newblock \emph{Journal of multivariate analysis}, 12(3):450--455, 1982.

\bibitem{Givens1984NormClass}
Givens, C.~R., R.~M. Shortt.
\newblock A class of {W}asserstein metrics for probability distributions.
\newblock \emph{Michigan Mathematical Journal}, 31(2):231--240, 1984.

\bibitem{Takatsu2011Wnorm}
Takatsu, A.
\newblock {W}asserstein geometry of {G}aussian measures.
\newblock \emph{Osaka Journal of Mathematics}, 48(4):1005--1026, 2011.

\bibitem{HolderNegExp2017}
Daoxiang, Z., P.~Yan.
\newblock On the hardy--carleman inequality for a negative exponent.
\newblock \emph{Journal of mathematical inequalities}, 11(3):885--890, 2017.

\end{thebibliography}
	}

	
	\newpage
	
	\appendix
	
	\section{Proof of Corollary~\ref{coro:AlphaW_Exp}} \label{app:coro:AlphaW_Exp}
	
	\begin{proof}
		The proof is similar to Theorem~\ref{theorem:AlphaW_Laplace}. We still apply the H\"{o}lder's inequality, but use the triangular inequality, $P_{N_{\theta}}(y-x) \leq e^{\eta(\theta) c(x-x')}  P_{N_{\theta}}(y-x') , \forall x,x',y$. 	
		\begin{align}
			D_{\alpha} & (P_{Y|s_i,\rho} \|  P_{Y|s_j,\rho}) \nonumber \\
			&= \frac{1}{\alpha-1} \log \int 
			\frac{ \big( \int P_N(y-x) \dif \pi(x,x') \big)^\alpha }{ \big( \int P_N(y-x') \dif \pi(x,x') \big)^{\alpha-1} } \dif y \nonumber  \\
			&\leq  \frac{1}{\alpha-1} \log \int 
			\frac{ \big( \int P_N(y-x') e^{\eta(\theta) c(x-x')} \dif \pi(x,x') \big)^\alpha }{ \big( \int P_N(y-x') \dif \pi(x,x') \big)^{\alpha-1} } \dif y \nonumber  \\
			&=  \frac{1}{\alpha-1} \log \int 
			\frac{ \big( \int P_N(y-x')^{\frac{1}{\alpha}} P_N(y-x')^{\frac{\alpha-1}{\alpha}}  e^{\eta(\theta) c(x-x')} \dif \pi(x,x') \big)^\alpha }{ \big( \int P_N(y-x') \dif \pi(x,x') \big)^{\alpha-1} } \dif y \nonumber  \\
			&\leq  \frac{1}{\alpha-1} \log \int 
			\frac{ \big( \int P_N(y-x') \dif \pi(x,x') \big)^{\alpha-1}  \int P_N(y-x') e^{\alpha \eta(\theta) c(x-x')} \dif \pi(x,x')  }{ \big( \int P_N(y-x') \dif \pi(x,x') \big)^{\alpha-1} } \dif y \nonumber  \\
			&=  \frac{1}{\alpha-1} \log \iint  P_N(y-x') e^{\alpha \eta(\theta) c(x-x')} \dif \pi(x,x') \dif y \nonumber \\
			&=  \frac{1}{\alpha-1} \log \int  \big( \int  P_N(y-x') \dif y \big) e^{\alpha \eta(\theta) c(x-x')} \dif \pi(x,x') \nonumber \\
			&=  \frac{1}{\alpha-1} \log \int e^{\alpha \eta(\theta) c(x-x')} \dif \pi(x,x') \label{eq:AlphaW_Exp_Raw}
		\end{align}
		for all $\alpha \in (1,\infty]$. 
		Substitute the Kantorovich optimal transport plan $\pi^*$. 
		Requesting~\eqref{eq:AlphaW_Exp_Raw} to be upper bounded by $\epsilon$ and search the smallest $\theta$ that holds this condition for all $\rho$ and $(s_i,s_j) \in \SetPair$, we have Corollary~\eqref{coro:AlphaW_Exp}. 
	\end{proof}

	\section{Proof of Theorem~\ref{theorem:AlphaW_Gauss}} \label{app:theorem:AlphaW_Gauss}
	\begin{proof}
		For Gaussian noise, we have for all $\alpha \in (1,\infty)$, 
		\begin{align}
			&D_{\alpha}  (P_{Y|s_i,\rho} \|  P_{Y|s_j,\rho}) \nonumber \\ 
			&=   \frac{1}{\alpha-1} \log \int \frac{1}{\sqrt{2\pi} \sigma}
			\frac{  \big( \int e^{-\frac{(y-x)^2}{2\sigma^2}} \dif \pi(x,x') \big)^\alpha }{ \big( \int e^{-\frac{(y-x')^2}{2\sigma^2}} \dif \pi(x,x')  \big)^{\alpha-1}  }\dif y  \nonumber \\
			&=   \frac{1}{\alpha-1} \log \int \frac{1}{\sqrt{2\pi} \sigma}
			\frac{  \big( \int e^{-\frac{(y-x')^2 + 2(y-x')(x'-x) + (x-x')^2}{2\sigma^2}} \dif \pi(x,x') \big)^\alpha }{ \big( \int e^{-\frac{(y-x')^2}{2\sigma^2}} \dif \pi(x,x')  \big)^{\alpha-1}  }\dif y  \\
			&=   \frac{1}{\alpha-1} \log \int \frac{1}{\sqrt{2\pi} \sigma}
			\frac{  \big( \int e^{-\frac{(y-x')^2}{2\sigma^2} (\frac{1}{\alpha} + \frac{\alpha-1}{\alpha})}  e^{-\frac{ 2(y-x')(x'-x) + (x-x')^2}{2\sigma^2}} \dif \pi(x,x') \big)^\alpha }{ \big( \int e^{-\frac{(y-x')^2}{2\sigma^2}} \dif \pi(x,x')  \big)^{\alpha-1}  }\dif y  \\
			&\leq    \frac{1}{\alpha-1} \log \int \frac{1}{\sqrt{2\pi} \sigma}
			\frac{ \big( \int e^{-\frac{(y-x')^2}{2\sigma^2}} \dif \pi(x,x')  \big)^{\alpha-1}  \int e^{-\frac{(y-x')^2+2\alpha(y-x')(x'-x) + \alpha (x-x')^2}{2\sigma^2}} \dif \pi(x,x')   }{ \big( \int e^{-\frac{(y-x')^2}{2\sigma^2}} \dif \pi(x,x')  \big)^{\alpha-1}  }\dif y  
			\label{eq:HolderInq_Gauss}\\
			&=   \frac{1}{\alpha-1} \log \iint \frac{1}{\sqrt{2\pi} \sigma}
			e^{-\frac{(y-x')^2+2\alpha(y-x')(x'-x) + \alpha^2 (x-x')^2}{2\sigma^2}}   e^{\frac{  \alpha^2 (x-x')^2 - \alpha (x-x')^2}{2\sigma^2}}  \dif \pi(x,x')  \dif y \nonumber \\
			&=  \frac{1}{\alpha-1} \log \int \Big( \int P_{N}(y + (\alpha - 1) x'-\alpha x)) \dif y \Big)     e^{\frac{  \alpha^2 (x-x')^2 - \alpha (x-x')^2}{2\sigma^2}}  \dif \pi(x,x') \nonumber \\
			&=  \frac{1}{\alpha-1} \log \int  e^{  \alpha (\alpha-1)  \frac{ (x-x')^2}{2\sigma^2}}  \dif \pi(x,x'),  \label{eq:AlphaW_Gauss_Raw}
		\end{align}
		where inequality~\eqref{eq:HolderInq_Gauss} is by applying Holder's inequality. We still adopt Kantorovich mechanism $\pi^*$ to tune to the lowest level of noise, and get  \eqref{eq:theorem:AlphaW_Gaussian}. 
	\end{proof}
	
	\section{A Sufficient Condition for $\alpha \in (0,1)$} \label{app:Cond_AlpahZeroOne}
	
	\begin{proposition} \label{prop:Cond_AlpahZeroOne}
		For $\alpha \in (0,1)$, $D_\alpha (P_{Y|s_i,\rho}, P_{Y|s_j,\rho}) \leq \epsilon$, if 
		$$ \int e^{-\alpha\frac{|x-x'|}{b}}  \dif \pi(x,x') \geq  e^{(\alpha-1) \epsilon} . $$
	\end{proposition}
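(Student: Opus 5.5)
The plan is to mirror the proof of Theorem~\ref{theorem:AlphaW_Laplace}, reversing every inequality. The reversal is needed because for $\alpha\in(0,1)$ the prefactor $\frac{1}{\alpha-1}$ in \eqref{eq:RD} is negative. The first step is to rewrite the target. Since $\log$ is increasing and $\frac{1}{\alpha-1}<0$, the bound $D_\alpha(P_{Y|s_i,\rho}\|P_{Y|s_j,\rho})\le\epsilon$ is equivalent to the \emph{lower} bound
\[
\int P_{Y|S}(y|s_i,\rho)^{\alpha}\,P_{Y|S}(y|s_j,\rho)^{1-\alpha}\dif y \;\ge\; e^{(\alpha-1)\epsilon}.
\]
So it suffices to show that this integral is at least $\int e^{-\alpha\frac{|x-x'|}{b}}\dif\pi(x,x')$ for any coupling $\pi$ of $P_{X|s_i,\rho}$ and $P_{X|s_j,\rho}$. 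As in \eqref{eq:AllPi}, I will express both convolutions \eqref{eq:conv} through the same $\pi$. This gives the integrand $\frac{1}{2b}\big(\int e^{-\frac{|y-x|}{b}}\dif\pi\big)^{\alpha}\big(\int e^{-\frac{|y-x'|}{b}}\dif\pi\big)^{1-\alpha}$.

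The second step uses the triangle inequality in the opposite direction from \eqref{eq:TriIneq}, namely $e^{-\frac{|y-x|}{b}}\ge e^{-\frac{|y-x'|}{b}}e^{-\frac{|x-x'|}{b}}$. Since $t\mapsto t^\alpha$ is increasing, the numerator is bounded below by $\big(\int w\,e^{-\frac{|x-x'|}{b}}\dif\pi\big)^\alpha$, where $w(y,x'):=e^{-\frac{|y-x'|}{b}}$.

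The third step applies the reverse H\"older inequality, which is the key step. For exponent $p=\alpha\in(0,1)$ with conjugate $q=\frac{\alpha}{\alpha-1}<0$, and positive $F,G$, it states
\[
\int FG\dif\pi\;\ge\;\Big(\int F^{\alpha}\dif\pi\Big)^{1/\alpha}\Big(\int G^{\frac{\alpha}{\alpha-1}}\dif\pi\Big)^{\frac{\alpha-1}{\alpha}}
\]
(cf.\ \cite{HolderNegExp2017}). I will take the same split as in the proof of Theorem~\ref{theorem:AlphaW_Laplace}: $F=w^{1/\alpha}e^{-\frac{|x-x'|}{b}}$ and $G=w^{\frac{\alpha-1}{\alpha}}$. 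Then $F^\alpha=w\,e^{-\alpha\frac{|x-x'|}{b}}$ and $G^{q}=w$. Raising to the power $\alpha$ gives
\[
\Big(\int w e^{-\frac{|x-x'|}{b}}\dif\pi\Big)^{\alpha}\ \ge\ \int w\,e^{-\alpha\frac{|x-x'|}{b}}\dif\pi\cdot\Big(\int w\dif\pi\Big)^{\alpha-1}.
\]
The factor $\big(\int w\dif\pi\big)^{\alpha-1}$ cancels exactly against $\big(\int w\dif\pi\big)^{1-\alpha}$ from the second distribution.

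In the final step I integrate over $y$ with Fubini, using $\int\frac{1}{2b}e^{-\frac{|y-x'|}{b}}\dif y=1$. This yields $\int P^\alpha Q^{1-\alpha}\dif y\ge\int e^{-\alpha\frac{|x-x'|}{b}}\dif\pi(x,x')$. Combined with the hypothesis, this gives the first display and hence $D_\alpha\le\epsilon$.

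I expect the main obstacle to be the reverse H\"older step. Its validity requires strict positivity of $G$ and finiteness of $\int G^{q}\dif\pi$ with negative $q$. Both hold here because $w>0$ everywhere and $w\le 1$, so $\int w\dif\pi$ is a positive finite number. I would verify that the inequality direction is correct by checking the equality case $F\propto G^{\,q/p}$. I would also keep careful track of the sign flips coming from $\alpha-1<0$, both in the exponent $1-\alpha$ and in the final passage through $\frac{1}{\alpha-1}\log$.
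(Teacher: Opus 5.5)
Your proof is correct and follows essentially the same route as the paper's. Both use the reversed triangle bound $e^{-|y-x|/b}\ge e^{-|y-x'|/b}e^{-|x-x'|/b}$, then the reverse H\"older inequality with conjugates $\alpha$ and $\frac{\alpha}{\alpha-1}<0$ under the same split, then cancellation of $\big(\int e^{-|y-x'|/b}\dif\pi\big)^{\alpha-1}$ and Fubini. Your sign bookkeeping, done as a lower bound on $\int P^\alpha Q^{1-\alpha}\dif y$, is actually cleaner than the paper's displayed chain: the paper drops the minus sign and writes $e^{\alpha|x-x'|/b}$ after the H\"older step and in its final line, whereas $e^{-\alpha|x-x'|/b}$, as you have it, is what matches the hypothesis.
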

	\begin{proof}
		For each $P_{X|s_i,\rho}$ and $P_{X|s_j,\rho}$, we have 
		\begin{align*}
			D_{\alpha} (P_{Y|s_i,\rho} \| & P_{Y|s_j,\rho})\\
			&\leq \frac{1}{\alpha-1} \log \int \frac{1}{2b}
			\frac{  \big( \int e^{-\frac{|y-x'|}{b}} e^{-\frac{|x-x'|}{b}} \dif \pi(x,x') \big)^\alpha }{ \big( \int e^{-\frac{|y-x'|}{b}} \dif \pi(x,x')  \big)^{\alpha-1}  }	 \\
			&=  \frac{1}{\alpha-1} \log \int \frac{1}{2b}
			\frac{  \big( \int e^{-\frac{|y-x'|}{b} \frac{1}{\alpha}} e^{-\frac{|y-x'|}{b} \frac{\alpha-1}{\alpha}}  e^{-\frac{|x-x'|}{b}} \dif \pi(x,x') \big)^\alpha }{ \big( \int e^{-|y-x'|} \dif \pi(x,x')  \big)^{\alpha-1}  }	\dif y  \\
			&\leq  \frac{1}{\alpha-1} \log \int \frac{1}{2b} 
			\frac{  \big( \int e^{-\frac{|y-x'|}{b}} \dif \pi(x,x')  \big)^{\alpha-1}  \int e^{-\frac{|y-x'|}{b}}  e^{\alpha \frac{|x-x'|}{b}} \dif \pi(x,x')  }{ \big( \int e^{-\frac{|y-x'|}{b}} \dif \pi(x,x')  \big)^{\alpha-1}  }  \dif y  \\
			&=  \frac{1}{\alpha-1} \log \int \frac{1}{2b}  
			\int e^{-\frac{|y-x'|}{b}}  e^{\alpha \frac{-|x-x'|}{b}} \dif \pi(x,x') \dif y  \\
			&= \frac{1}{\alpha-1} \log  \int e^{\alpha \frac{|x-x'|}{b}} \dif \pi(x,x')  
		\end{align*}
		for $\alpha \in (0,1)$. Here, we still adopt H\"{o}lder conjugates $\alpha$ and $\frac{\alpha}{\alpha-1}$. But, the inequality is reversed as $\frac{\alpha}{\alpha-1}$ is negative~\cite{HolderNegExp2017}. 
	\end{proof}

	\section{More Experimental Results}
	\label{sec:ExpExtra}
	
	\begin{figure}[ht]
		\centering
		\scalebox{0.78}{
\begin{tikzpicture}

\definecolor{cornflowerblue105172213}{RGB}{105,172,213}
\definecolor{crimson2143940}{RGB}{214,39,40}
\definecolor{gainsboro217}{RGB}{217,217,217}
\definecolor{lightblue166205227}{RGB}{166,205,227}
\definecolor{skyblue136190220}{RGB}{136,190,220}
\definecolor{steelblue34114181}{RGB}{34,114,181}
\definecolor{steelblue55135192}{RGB}{55,135,192}
\definecolor{steelblue79155203}{RGB}{79,155,203}
\definecolor{teal1894166}{RGB}{18,94,166}
\definecolor{teal873145}{RGB}{8,73,145}

\begin{axis}[
height=0.42\textwidth,
legend cell align={left},
legend style={
  fill opacity=0.8,
  draw opacity=1,
  text opacity=1,
  at={(0.03,0.97)},
  anchor=north west,
  draw=none
},
log basis y={10},
ytick={1.1,1.2,1.3,1.4,1.5,1.6,1.7,1.8,1.9},
yticklabels={
  \(\displaystyle {1.1\times10^{0}}\),
  \(\displaystyle {1.2\times10^{0}}\),
  \(\displaystyle {1.3\times10^{0}}\),
  \(\displaystyle {1.4\times10^{0}}\),
  \(\displaystyle {1.5\times10^{0}}\),
  \(\displaystyle {1.6\times10^{0}}\),
  \(\displaystyle {1.7\times10^{0}}\),
  \(\displaystyle {1.8\times10^{0}}\),
  \(\displaystyle {1.9\times10^{0}}\)
},
tick align=outside,
tick pos=left,
title={\texttt{adult},$\epsilon=0.5$},
width=0.62\textwidth,
x grid style={gainsboro217},
xlabel={\(\displaystyle \alpha\)},
xmajorgrids,
xmin=0.905, xmax=5.195,
xtick style={color=black},
y grid style={gainsboro217},
ylabel={$W_\alpha$ vs. $e^{\frac{\alpha-1}{\alpha}} \epsilon$ in \eqref{eq:AlphaW_Lap_Min_Alt}},
ymajorgrids,
ymin=1.00871283922707, ymax=1.81171623985157,
ymode=log,
ytick style={color=black},
ytick={0.1,1,10,100},
yticklabels={
  \(\displaystyle {10^{-1}}\),
  \(\displaystyle {10^{0}}\),
  \(\displaystyle {10^{1}}\),
  \(\displaystyle {10^{2}}\)
}
]
table {%
1.1 1.21973682185888
1.2 1.22849742964056
1.5 1.25669637023807
2 1.31006188885074
3 1.43866329329683
5 1.76412785663877
};
\addplot [line width=0.76pt, teal1894166, mark=*, mark size=1.6, mark options={solid}]
table {%
1.1 1.19472559283451
1.2 1.20184373688914
1.5 1.22465157577622
2 1.26751790398884
3 1.37021250993304
5 1.62758033174714
};
\addplot [line width=0.76pt, steelblue34114181, mark=*, mark size=1.6, mark options={solid}]
table {%
1.1 1.17466418398806
1.2 1.18055750915666
1.5 1.19936476716134
2 1.23449028067585
3 1.31816203383358
5 1.52673330563302
};
\addplot [line width=0.76pt, steelblue55135192, mark=*, mark size=1.6, mark options={solid}]
table {%
1.1 1.14457032790595
1.2 1.14879530815392
1.5 1.16218698915045
2 1.18692361467826
3 1.24519615992986
5 1.38977972121204
};
\addplot [line width=0.76pt, steelblue79155203, mark=*, mark size=1.6, mark options={solid}]
table {%
1.1 1.10715220118262
1.2 1.10961965835163
1.5 1.11736257900966
2 1.13142182321305
3 1.16386776522579
5 1.24387144427446
};
\addplot [line width=0.76pt, cornflowerblue105172213, mark=*, mark size=1.6, mark options={solid}]
table {%
1.1 1.06468462116973
1.2 1.06565612986266
1.5 1.06866169199497
2 1.07397984906001
3 1.08579965883266
5 1.11408089080285
};
\addplot [line width=0.76pt, skyblue136190220, mark=*, mark size=1.6, mark options={solid}]
table {%
1.1 1.04621507438672
1.2 1.04672962172211
1.5 1.04830968893789
2 1.05106646835763
3 1.05705372569688
5 1.07094978554977
};
\addplot [line width=0.76pt, lightblue166205227, mark=*, mark size=1.6, mark options={solid}]
table {%
1.1 1.03592345945743
1.2 1.03624112729034
1.5 1.0372121886091
2 1.03889164229931
3 1.04248462104733
5 1.05063005539671
};
\addplot [line width=0.76pt, crimson2143940, dashed, mark=square*, mark size=1.5, mark options={solid}]
table {%
1.1 1.04650343519487
1.2 1.08690404952123
1.5 1.18136041286565
2 1.28402541668774
3 1.39561242508609
5 1.49182469764127
};
\end{axis}

\end{tikzpicture}} \quad
		\scalebox{0.78}{
\begin{tikzpicture}

\definecolor{cornflowerblue105172213}{RGB}{105,172,213}
\definecolor{crimson2143940}{RGB}{214,39,40}
\definecolor{gainsboro217}{RGB}{217,217,217}
\definecolor{lightblue166205227}{RGB}{166,205,227}
\definecolor{skyblue136190220}{RGB}{136,190,220}
\definecolor{steelblue34114181}{RGB}{34,114,181}
\definecolor{steelblue55135192}{RGB}{55,135,192}
\definecolor{steelblue79155203}{RGB}{79,155,203}
\definecolor{teal1894166}{RGB}{18,94,166}
\definecolor{teal873145}{RGB}{8,73,145}

\begin{axis}[
height=0.42\textwidth,
legend cell align={left},
legend style={
  fill opacity=0.8,
  draw opacity=1,
  text opacity=1,
  at={(0.03,0.97)},
  anchor=north west,
  draw=none
},
log basis y={10},
minor ytick={0.02,0.03,0.04,0.05,0.06,0.07,0.08,0.09,0.2,0.3,0.4,0.5,0.6,0.7,0.8,0.9,2,3,4,5,6,7,8,9,20,30,40,50,60,70,80,90,200,300,400,500,600,700,800,900},
yticklabels={
  \(\displaystyle {2\times10^{-2}}\),
  \(\displaystyle {3\times10^{-2}}\),
  \(\displaystyle {4\times10^{-2}}\),
  ,
  \(\displaystyle {6\times10^{-2}}\),
  ,
  ,
  ,
  \(\displaystyle {2\times10^{-1}}\),
  \(\displaystyle {3\times10^{-1}}\),
  \(\displaystyle {4\times10^{-1}}\),
  ,
  \(\displaystyle {6\times10^{-1}}\),
  ,
  ,
  ,
  \(\displaystyle {2\times10^{0}}\),
  \(\displaystyle {3\times10^{0}}\),
  \(\displaystyle {4\times10^{0}}\),
  ,
  \(\displaystyle {6\times10^{0}}\),
  ,
  ,
  ,
  \(\displaystyle {2\times10^{1}}\),
  \(\displaystyle {3\times10^{1}}\),
  \(\displaystyle {4\times10^{1}}\),
  ,
  \(\displaystyle {6\times10^{1}}\),
  ,
  ,
  ,
  \(\displaystyle {2\times10^{2}}\),
  \(\displaystyle {3\times10^{2}}\),
  \(\displaystyle {4\times10^{2}}\),
  ,
  \(\displaystyle {6\times10^{2}}\),
  ,
  ,
},
tick align=outside,
tick pos=left,
title={\texttt{heart disease},$\epsilon=0.5$},
width=0.62\textwidth,
x grid style={gainsboro217},
xlabel={\(\displaystyle \alpha\)},
xmajorgrids,
xmin=0.905, xmax=5.195,
xtick style={color=black},
y grid style={gainsboro217},
ylabel={$W_\alpha$ vs. $e^{\frac{\alpha-1}{\alpha}} \epsilon$ in \eqref{eq:AlphaW_Lap_Min_Alt}},
ymajorgrids,
ymin=0.988246388372893, ymax=3.1703587330436,
ymode=log,
ytick style={color=black},
ytick={0.01,0.1,1,10,100},
yticklabels={
  \(\displaystyle {10^{-2}}\),
  \(\displaystyle {10^{-1}}\),
  \(\displaystyle {10^{0}}\),
  \(\displaystyle {10^{1}}\),
  \(\displaystyle {10^{2}}\)
}
]
\addplot [line width=0.76pt, teal873145, mark=*, mark size=1.6, mark options={solid}]
table {%
1.1 1.16769562145903
1.2 1.1755395181757
1.5 1.20602448038031
2 1.2915880078282
3 1.68452021582492
5 3.00675023209448
};
\addplot [line width=0.76pt, teal1894166, mark=*, mark size=1.6, mark options={solid}]
table {%
1.1 1.14513922944926
1.2 1.15075685114704
1.5 1.17196795045186
2 1.22856876115552
3 1.48374951026972
5 2.47509229088572
};
\addplot [line width=0.76pt, steelblue34114181, mark=*, mark size=1.6, mark options={solid}]
table {%
1.1 1.1280773185321
1.2 1.13227895387059
1.5 1.14776914757709
2 1.18730316530805
3 1.35919888423584
5 2.11119170586306
};
\addplot [line width=0.76pt, steelblue55135192, mark=*, mark size=1.6, mark options={solid}]
table {%
1.1 1.11469879952812
1.2 1.11795022507851
1.5 1.12970093798838
2 1.15855329302279
3 1.27857158300136
5 1.85293355230937
};
\addplot [line width=0.76pt, steelblue79155203, mark=*, mark size=1.6, mark options={solid}]
table {%
1.1 1.10391257731772
1.2 1.10649834657964
1.5 1.11568804624363
2 1.13751004229837
3 1.22416923258873
5 1.66505479495185
};
\addplot [line width=0.76pt, cornflowerblue105172213, mark=*, mark size=1.6, mark options={solid}]
table {%
1.1 1.0875615945472
1.2 1.08930358349231
1.5 1.09533290709692
2 1.10889623757448
3 1.15822803644504
5 1.42208742277018
};
\addplot [line width=0.76pt, skyblue136190220, mark=*, mark size=1.6, mark options={solid}]
table {%
1.1 1.06674893734056
1.2 1.0676880240532
1.5 1.07082658785811
2 1.07738781876688
3 1.09819145889062
5 1.20218084990431
};
\addplot [line width=0.76pt, lightblue166205227, mark=*, mark size=1.6, mark options={solid}]
table {%
1.1 1.04202056237784
1.2 1.04235736723466
1.5 1.0434366003834
2 1.04549972504414
3 1.05094835318125
5 1.07108675836668
};
\addplot [line width=0.76pt, crimson2143940, dashed, mark=square*, mark size=1.5, mark options={solid}]
table {%
1.1 1.04650343519487
1.2 1.08690404952123
1.5 1.18136041286565
2 1.28402541668774
3 1.39561242508609
5 1.49182469764127
};
\end{axis}

\end{tikzpicture}}
		\scalebox{0.78}{
\begin{tikzpicture}

\definecolor{cornflowerblue96166209}{RGB}{96,166,209}
\definecolor{crimson2143940}{RGB}{214,39,40}
\definecolor{darkcyan2198169}{RGB}{21,98,169}
\definecolor{gainsboro217}{RGB}{217,217,217}
\definecolor{lightblue166205227}{RGB}{166,205,227}
\definecolor{skyblue130186219}{RGB}{130,186,219}
\definecolor{steelblue41121185}{RGB}{41,121,185}
\definecolor{steelblue65145197}{RGB}{65,145,197}
\definecolor{teal873145}{RGB}{8,73,145}

\begin{axis}[
height=0.42\textwidth,
legend cell align={left},
legend style={
  fill opacity=0.8,
  draw opacity=1,
  text opacity=1,
  at={(1.08,0.97)},
  anchor=north west,
  draw=none
},
log basis y={10},
ytick={0.02,0.03,0.04,0.05,0.06,0.07,0.08,0.09,0.2,0.3,0.4,0.5,0.6,0.7,0.8,0.9,2,3,4,5,6,7,8,9,20,30,40,50,60,70,80,90,200,300,400,500,600,700,800,900},
yticklabels={
  \(\displaystyle {2\times10^{-2}}\),
  \(\displaystyle {3\times10^{-2}}\),
  \(\displaystyle {4\times10^{-2}}\),
  ,
  \(\displaystyle {6\times10^{-2}}\),
  ,
  ,
  ,
  \(\displaystyle {2\times10^{-1}}\),
  \(\displaystyle {3\times10^{-1}}\),
  \(\displaystyle {4\times10^{-1}}\),
  ,
  \(\displaystyle {6\times10^{-1}}\),
  ,
  ,
  ,
  \(\displaystyle {2\times10^{0}}\),
  \(\displaystyle {3\times10^{0}}\),
  \(\displaystyle {4\times10^{0}}\),
  ,
  \(\displaystyle {6\times10^{0}}\),
  ,
  ,
  ,
  \(\displaystyle {2\times10^{1}}\),
  \(\displaystyle {3\times10^{1}}\),
  \(\displaystyle {4\times10^{1}}\),
  ,
  \(\displaystyle {6\times10^{1}}\),
  ,
  ,
  ,
  \(\displaystyle {2\times10^{2}}\),
  \(\displaystyle {3\times10^{2}}\),
  \(\displaystyle {4\times10^{2}}\),
  ,
  \(\displaystyle {6\times10^{2}}\),
  ,
  ,
},
tick align=outside,
tick pos=left,
title={\texttt{student performance},$\epsilon=0.5$},
width=0.62\textwidth,
x grid style={gainsboro217},
xlabel={\(\displaystyle \alpha\)},
xmajorgrids,
xmin=0.905, xmax=5.195,
xtick style={color=black},
y grid style={gainsboro217},
ylabel={$W_\alpha$ vs. $e^{\frac{\alpha-1}{\alpha}} \epsilon$ in \eqref{eq:AlphaW_Lap_Min_Alt}},
ymajorgrids,
ymin=0.982018665216272, ymax=3.97905451748895,
ymode=log,
ytick style={color=black},
ytick={0.01,0.1,1,10,100},
yticklabels={
  \(\displaystyle {10^{-2}}\),
  \(\displaystyle {10^{-1}}\),
  \(\displaystyle {10^{0}}\),
  \(\displaystyle {10^{1}}\),
  \(\displaystyle {10^{2}}\)
}
]
\addplot [line width=0.76pt, teal873145, mark=*, mark size=1.6, mark options={solid}]
table {%
1.1 1.23005785116535
1.2 1.23988807248463
1.5 1.27736524260129
2 1.38301611754755
3 1.90705052876682
5 3.73386811230071
};
\addlegendentry{$W_\alpha$, $b=2$}
\addplot [line width=0.76pt, darkcyan2198169, mark=*, mark size=1.6, mark options={solid}]
table {%
1.1 1.16610096484475
1.2 1.17086027257788
1.5 1.18769789223346
2 1.22847466895068
3 1.4072067716483
5 2.28472176642092
};
\addlegendentry{$W_\alpha$, $b=2.5$}
\addplot [line width=0.76pt, steelblue41121185, mark=*, mark size=1.6, mark options={solid}]
table {%
1.1 1.13022352572604
1.2 1.13302677380406
1.5 1.14252697039668
2 1.16341309239541
3 1.24132307213543
5 1.68358729924273
};
\addlegendentry{$W_\alpha$, $b=3$}
\addplot [line width=0.76pt, steelblue65145197, mark=*, mark size=1.6, mark options={solid}]
table {%
1.1 1.10717288791392
1.2 1.1090226015793
1.5 1.11512617127815
2 1.12773680063829
3 1.16903358300915
5 1.39985940378097
};
\addlegendentry{$W_\alpha$, $b=3.5$}
\addplot [line width=0.76pt, cornflowerblue96166209, mark=*, mark size=1.6, mark options={solid}]
table {%
1.1 1.09108562082504
1.2 1.09239914792079
1.5 1.09665715248205
2 1.10509558941511
3 1.13020584081011
5 1.2578296970673
};
\addlegendentry{$W_\alpha$, $b=4$}
\addplot [line width=0.76pt, skyblue130186219, mark=*, mark size=1.6, mark options={solid}]
table {%
1.1 1.07008067258794
1.2 1.07084279500689
1.5 1.07326034227951
2 1.07781470245604
3 1.08981553128658
5 1.13849439420282
};
\addlegendentry{$W_\alpha$, $b=5$}
\addplot [line width=0.76pt, lightblue166205227, mark=*, mark size=1.6, mark options={solid}]
table {%
1.1 1.05695881611708
1.2 1.05745677907852
1.5 1.05901675539
2 1.06187213584362
3 1.06889053246657
5 1.09284057063733
};
\addlegendentry{$W_\alpha$, $b=6$}
\addplot [line width=0.76pt, crimson2143940, dashed, mark=square*, mark size=1.5, mark options={solid}]
table {%
1.1 1.04650343519487
1.2 1.08690404952123
1.5 1.18136041286565
2 1.28402541668774
3 1.39561242508609
5 1.49182469764127
};
\addlegendentry{$\exp((\alpha-1)\epsilon/\alpha)$}
\end{axis}

\end{tikzpicture}}
		\caption{The comparison of 	$W_{\alpha}(P_{X|s_i,\rho}, P_{X|s_j,\rho}) $ with it's upper bound $e^{ \frac{\alpha-1}{\alpha} \epsilon }$ in  \eqref{eq:AlphaW_Lap_Min_Alt} for fixed $\epsilon$. } \label{fig:app1}
	\end{figure}
	
	\begin{figure}[ht]
		\scalebox{0.6}{
\begin{tikzpicture}

\definecolor{crimson2143940}{RGB}{214,39,40}
\definecolor{darkslategray77}{RGB}{77,77,77}
\definecolor{gainsboro217}{RGB}{217,217,217}
\definecolor{steelblue31119180}{RGB}{31,119,180}

\begin{axis}[
height=0.48\textwidth,
legend cell align={left},
legend style={
  fill opacity=0.8,
  draw opacity=1,
  text opacity=1,
  at={(0.97,0.03)},
  anchor=south east,
  draw=none
},
log basis x={10},
tick align=outside,
tick pos=left,
title={\texttt{adult}, \(\displaystyle \epsilon=0.5\)},
width=0.55\textwidth,
x grid style={gainsboro217},
xlabel={\(\displaystyle \alpha\)},
xmajorgrids,
xmin=7.07945784384138, xmax=14125.3754462276,
xminorgrids,
xmode=log,
xtick style={color=black},
xtick={0.1,1,10,100,1000,10000,100000,1000000},
xticklabels={
  \(\displaystyle {10^{-1}}\),
  \(\displaystyle {10^{0}}\),
  \(\displaystyle {10^{1}}\),
  \(\displaystyle {10^{2}}\),
  \(\displaystyle {10^{3}}\),
  \(\displaystyle {10^{4}}\),
  \(\displaystyle {10^{5}}\),
  \(\displaystyle {10^{6}}\)
},
y grid style={gainsboro217},
ylabel={scale parameter $b$},
ymajorgrids,
ymin=1.78384166391285, ymax=4.10495049032441,
yminorgrids,
ytick style={color=black}
]
\addplot [line width=1pt, steelblue31119180, mark=square]
table {%
10 1.88934661056792
20 2.50118557061847
50 3.22288312195012
100 3.56963493166109
200 3.77258344835478
500 3.90582068922786
1000 3.95234938181663
2000 3.97603192923474
5000 3.99037817922181
10000 3.99518329642854
};
\addlegendentry{Theorem~\ref{theorem:AlphaW_Laplace}}
\addplot [line width=1pt, crimson2143940, mark=*]
table {%
10 3.50069473586707
20 3.73727750640228
50 3.89149915136827
100 3.94515559156046
200 3.97242680464276
500 3.98893415817611
1000 3.99446095771625
2000 3.99722894593121
5000 3.99889121013861
10000 3.99944554366934
};
\addlegendentry{ \cite[Corollary~3.1]{RPP2024} }
\addplot [line width=1pt, purple, dashed]
table {%
7.07945784384138 4
14125.3754462276 4
};
\addlegendentry{limit $W_\infty/\epsilon$ \cite{PufferfishWasserstein2017Song}}
\end{axis}

\end{tikzpicture}} 
		\scalebox{0.6}{
\begin{tikzpicture}

\definecolor{crimson2143940}{RGB}{214,39,40}
\definecolor{darkslategray77}{RGB}{77,77,77}
\definecolor{gainsboro217}{RGB}{217,217,217}
\definecolor{steelblue31119180}{RGB}{31,119,180}

\begin{axis}[
height=0.48\textwidth,
legend cell align={left},
legend style={
  fill opacity=0.8,
  draw opacity=1,
  text opacity=1,
  at={(0.97,0.03)},
  anchor=south east,
  draw=none
},
log basis x={10},
tick align=outside,
tick pos=left,
title={\texttt{heart disease}, \(\displaystyle \epsilon=0.5\)},
width=0.55\textwidth,
x grid style={gainsboro217},
xlabel={\(\displaystyle \alpha\)},
xmajorgrids,
xmin=7.07945784384138, xmax=14125.3754462276,
xminorgrids,
xmode=log,
xtick style={color=black},
xtick={0.1,1,10,100,1000,10000,100000,1000000},
xticklabels={
  \(\displaystyle {10^{-1}}\),
  \(\displaystyle {10^{0}}\),
  \(\displaystyle {10^{1}}\),
  \(\displaystyle {10^{2}}\),
  \(\displaystyle {10^{3}}\),
  \(\displaystyle {10^{4}}\),
  \(\displaystyle {10^{5}}\),
  \(\displaystyle {10^{6}}\)
},
y grid style={gainsboro217},
ylabel={scale parameter $b$},
ymajorgrids,
ymin=2.09757347410023, ymax=4.5090004134618,
yminorgrids,
ytick style={color=black}
]
\addplot [line width=1pt, steelblue31119180, mark =square]
table {%
10 2.20718378952576
20 2.92625410071596
50 3.66023880916731
100 3.99615212085998
200 4.18836367617303
500 4.31282978166637
1000 4.35597883003809
2000 4.37787875558037
5000 4.39112472968348
10000 4.39555788473105
};
\addlegendentry{Theorem~\ref{theorem:AlphaW_Laplace}}
\addplot [line width=1pt, crimson2143940, mark=*]
table {%
10 3.85076420945378
20 4.11100525704251
50 4.28064906650509
100 4.3396711507165
200 4.36966948510704
500 4.38782757399373
1000 4.39390705348787
2000 4.39695184052433
5000 4.39878033115247
10000 4.39939009803628
};
\addlegendentry{ \cite[Corollary~3.1]{RPP2024} }
\addplot [line width=1pt, purple, dashed]
table {%
7.07945784384138 4.4
14125.3754462276 4.4
};
\addlegendentry{limit $W_\infty/\epsilon$ \cite{PufferfishWasserstein2017Song}}
\end{axis}

\end{tikzpicture}} 
		\scalebox{0.6}{
\begin{tikzpicture}

\definecolor{crimson2143940}{RGB}{214,39,40}
\definecolor{darkslategray77}{RGB}{77,77,77}
\definecolor{gainsboro217}{RGB}{217,217,217}
\definecolor{steelblue31119180}{RGB}{31,119,180}

\begin{axis}[
height=0.48\textwidth,
legend cell align={left},
legend style={
  fill opacity=0.8,
  draw opacity=1,
  text opacity=1,
  at={(0.97,0.03)},
  anchor=south east,
  draw=none
},
log basis x={10},
tick align=outside,
tick pos=left,
title={\texttt{student performance}, \(\displaystyle \epsilon=0.5\)},
width=0.55\textwidth,
x grid style={gainsboro217},
xlabel={\(\displaystyle \alpha\)},
xmajorgrids,
xmin=7.07945784384138, xmax=14125.3754462276,
xminorgrids,
xmode=log,
xtick style={color=black},
xtick={0.1,1,10,100,1000,10000,100000,1000000},
xticklabels={
  \(\displaystyle {10^{-1}}\),
  \(\displaystyle {10^{0}}\),
  \(\displaystyle {10^{1}}\),
  \(\displaystyle {10^{2}}\),
  \(\displaystyle {10^{3}}\),
  \(\displaystyle {10^{4}}\),
  \(\displaystyle {10^{5}}\),
  \(\displaystyle {10^{6}}\)
},
y grid style={gainsboro217},
ylabel={scale parameter $b$},
ymajorgrids,
ymin=4.55677030160899, ymax=10.2577492666764,
yminorgrids,
ytick style={color=black}
]
\addplot [line width=1pt, steelblue31119180, mark=square]
table {%
10 4.81590570911205
20 6.48554059546592
50 8.21837830224452
100 9.02207448084607
200 9.48589964772665
500 9.78781478960675
1000 9.89276976126505
2000 9.94609587301224
5000 9.97836838721291
10000 9.98917248277337
};
\addlegendentry{Theorem~\ref{theorem:AlphaW_Laplace}}
\addplot [line width=0.72pt, crimson2143940, mark=*]
table {%
10 8.75173683966768
20 9.3431937660057
50 9.72874787842066
100 9.86288897890114
200 9.93106701160691
500 9.97233539544028
1000 9.98615239429062
2000 9.99307236482803
5000 9.99722802534651
10000 9.99861385917336
};
\addlegendentry{ \cite[Corollary~3.1]{RPP2024} }
\addplot [line width=0.48pt, purple, dashed]
table {%
7.07945784384138 10
14125.3754462276 10
};
\addlegendentry{limit $W_\infty/\epsilon$ \cite{PufferfishWasserstein2017Song}}
\end{axis}

\end{tikzpicture}} 
		\caption{The variation of scale parameter $b$ determined by Theorem~\ref{theorem:AlphaW_Laplace} and \cite[Corollary~3.1]{RPP2024}. They both approach $W_\infty/\epsilon$ mechanism proposed in \cite{PufferfishWasserstein2017Song} as $\alpha \rightarrow \infty$ for attaining $\epsilon$-pufferfish privacy. } \label{fig:app2}
	\end{figure}
	
	Figure~\ref{fig:app1} shows how the LHS $ W_{\alpha}(P_{X|s_i,\rho}, P_{X|s_j,\rho}) $and RHS $e^{ \frac{\alpha-1}{\alpha} \epsilon }$ of the inequality~\eqref{eq:AlphaW_Lap_Min_Alt} varies with $\alpha$, for different values of scale parameter $b$ of Laplace noise. 
	Figure~\ref{fig:app2} is an example of Remark~\ref{rem:Laplace}. It shows the scale parameter $b$ determined by Theorem~\ref{theorem:AlphaW_Laplace} and \cite[Corollary~3.1]{RPP2024} both converges to $W_\infty/\epsilon$, the $\infty$-Wasserstein mechanism in \cite{PufferfishWasserstein2017Song}, when $\alpha$ grows large.

	

\end{document}